\numberwithin{equation}{section}
\newcommand{\C}{{\mathbb C}}
\newcommand{\Z}{{\mathbb Z}}
\newcommand{\al}{\alpha}
\newcommand{\be}{\beta}
\newcommand{\ga}{\gamma}
\newcommand{\Ga}{\Gamma}
\newcommand{\la}{\lambda}
\newcommand{\ep}{\varepsilon}
\newcommand{\De}{\Delta}
\newcommand{\sg}{\sigma}
\newcommand{\z}{\zeta}
\newtheorem{theo}{{\sc \bf Theorem}}[section]
\newtheorem{lem}[theo]{{\sc \bf Lemma}}
\newtheorem{prop}[theo]{{\sc \bf Proposition}}
\newenvironment{rem}{\medskip\noindent{\it Remark:\/} }{\medskip}
\newcommand{\Pf}{{\operatorname{Pf}\,}}
\g@addto@macro{\endabstract}{\@setabstract}
\newcommand{\authorfootnotes}{\renewcommand\thefootnote{\@fnsymbol\c@footnote}}%
\begin{document}

\title[Dimer Model: Full Asymptotic Expansion]
{Dimer Model: Full Asymptotic Expansion of the Partition Function}

\begin{center}
  \maketitle \par \bigskip
  \normalsize
  \authorfootnotes
  Pavel Bleher\footnote{pbleher@iupui.edu}, Brad Elwood\footnote{bradelwood@gmail.com},
  Dra\v zen Petrovi\'c\footnote{petrovic.drazen@gmail.com}\par \bigskip

  Indiana University-Purdue University Indianapolis\par\bigskip
  
  \textit{\dedicatory{Dedicated to Ludvig Faddeev}}\par\bigskip

\end{center}


\date{\today}





\thanks{The first author is supported in part by the National Science Foundation (NSF) Grant DMS-1565602.}

\begin{abstract} We give a complete rigorous proof of the
full asymptotic expansion of the partition function of the dimer model on a square lattice
on a torus for general weights $z_h,z_v$ of the dimer model and arbitrary dimensions of the lattice $m,n$. We assume that $m$ is even and we show that the asymptotic expansion depends on the parity of $n$.  We review and extend the results of  Ivashkevich, Izmailian, and Hu \cite{IIH}
on the full asymptotic expansion of the partition function of the dimer model, 
and we give a rigorous estimate of the error term in the asymptotic expansion of the partition function.
\end{abstract}


\section{Introduction}

\subsection{Dimer Model on a Square Lattice}

We consider the dimer model on a square lattice $\Gamma_{m,n}=(V_{m,n},E_{m,n})$ on the torus $\Z_m\times \Z_n=\Z^2/(m\Z\times n\Z)$ (periodic boundary conditions), where
$V_{m,n}$ and $E_{m,n}$ are the sets of vertices and edges of $\Gamma_{m,n},$ respectively. 
A \textit{dimer} on $\Gamma_{m,n}$ is a set of two neighboring vertices $\langle x,y\rangle$ connected by an edge. 
A \textit{dimer configuration} $\sg$ on $\Ga_{m,n}$ is a set of dimers $\sg=\{\langle x_i,y_i\rangle,\;i=1,\ldots,\frac{mn}{2}\}$ which cover $V_{m,n}$ without overlapping. 
An example of a dimer configuration is shown in Fig. \ref{F8}. 
An obvious necessary condition for a configuration to exist is that at least one of $m,n$ is even, and so we assume that $m$ is even, $m=2m_0$.

\begin{figure}[h!]
\includegraphics[scale=.5]{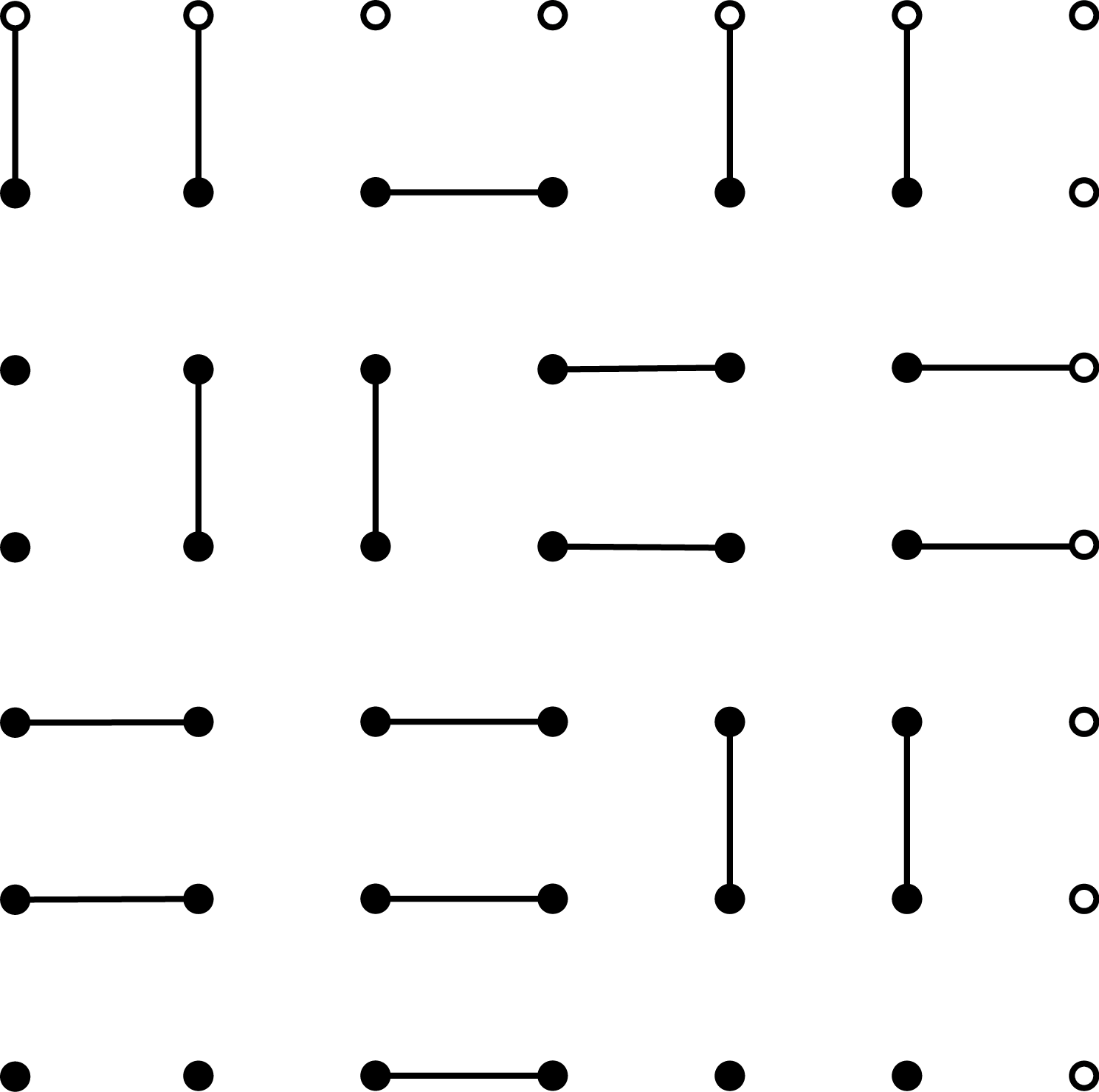}
\caption{Example of a dimer configuration on a square $6\times 6$ lattice on the torus.}
\label{F8}
\end{figure}

To define a weight of a dimer configuration, we split the full set of dimers in a configuration $\sg$ into two classes: horizontal and vertical, with respective weights $z_h, z_v>0.$ 
If we denote the total number of horizontal and vertical dimers in $\sg$ by $N_h(\sg)$ and $N_v(\sg),$ respectively, then the \textit{dimer configuration weight} is
\begin{equation}\label{int1}
w(\sg)=\prod_{i=1}^{\frac{mn}{2}}w(x_i,y_i)=z_h^{N_h(\sg)} z_v^{N_v(\sg)},
\end{equation}
where $w(x_i,y_i)$ denotes the weight of the dimer $\langle x_i,y_i\rangle\in\sg$.
We denote by $\Sigma_{m,n}$ the set of all dimer configurations on $\Ga_{m,n}$. 
The \textit{partition function} of the dimer model is given by
\begin{equation}\label{int2}
Z=\sum_{\sg\in\Sigma_{m,n}}w(\sg).
\end{equation}
Notice that if all the weights are set equal to one, then $Z$ simply counts the number of dimer configurations, or perfect matchings, on $\Gamma_{m,n}$. 

Our goal is to evaluate the full asymptotic series expansion of the partition function $Z$ as $m,n\to\infty$. The free energy of the dimer model on the square lattice was obtained in the papers of Kasteleyn \cite{Kas1} and Temperley and Fisher \cite{TempFish}. Our work is based on the
Kasteleyn's expression of the partition
function $Z$ on a torus as a linear combination of 4 Pfaffians developed in the works \cite{Kas1}, \cite{Kas2}, \cite{Kas3} (see also the works of Galluccio and Loebl \cite{GalluLoe}, Tesler \cite{Tes}, and  Cimasoni and Reshetikhin \cite{CimResh}). The constant term in the asymptotic of the partition function
was obtained by Ferdinand \cite{Ferdin} (see also the work
of Kenyon, Sun and Wilson \cite{KSW}). 

The asymptotic expansion of
the partition function on a torus was developed by Ivashkevich, Izmailian, and Hu \cite{IIH} and our calculations use their ideas. Ivashkevich, Izmailian, and Hu considered the case when $z_h=z_v$ and $n$ is even. In the present 
work we extend their calculations to arbitrary weights $z_h,z_v$
and to odd $n$. It is worth noticing that the asymptotic
expansions for even and odd values of $n$ are different. We give a 
complete rigorous proof of the asymptotic expansion of the partition
function, with an estimate of the error term.
The asymptotic expansion of the partition function is expressed in terms 
of the classical Jacobi theta functions, Dedekind eta function, and
Kronecker double series. The work \cite{IIH} has been further extended by 
Izmailian, Oganesyan, and Hu \cite{IOH} to the dimer model on a square lattice with various boundary conditions for both even and odd $n$. 
Our result for the dimer model on a torus coincides with the one in 
\cite{IOH} for even $n$, and for odd $n$ it  coincides  except for the value of the elliptic nome in formula \eqref{ntT6} below. The difference in the value of the elliptic nome for even and odd $n$ is explained after formula \eqref{eq4.4} in Section \ref{a3o} below.

It follows from \eqref{int1}, \eqref{int2} that the partition function $Z$ is a homogeneous polynomial of the variables $z_h,z_v$, and it can be written as
\begin{equation}\label{zeta1}
Z(z_h,z_v)=\sum_{\sigma\in \Sigma_{m,n}} z_h^{N_h(\sg)}z_v^{N_v(\sg)}
=z_h^{\frac{mn}{2}}Z(1,\zeta),
\end{equation} 
where
\begin{equation}\label{zeta2}
\zeta=\frac{z_v}{z_h}>0,
\end{equation}
so without loss of generality we may assume that
\begin{equation}\label{zeta3}
z_h=1,\quad z_v=\zeta,
\end{equation}
and we will evaluate the full asymptotic series expansion of the partition function 
$Z(1,\z)$ as $m,n\to\infty$.

To formulate our main result we have to introduce and remind some special functions and operators. 

\subsection{Function $g(x)$}\label{Fun_g}
Introduce the function
\begin{equation}\label{eq2.1}
g(x)=\ln\left(\z \sin(\pi x)+\sqrt{1+\z^2\sin^2(\pi x)}\right),
\end{equation}
where $\z>0$ is defined in \eqref{zeta2}.
Observe that $g(x)$ has the following properties:
\begin{enumerate} 
\item $g(-x)= -g(x),$ 
\item $g(x+1)=-g(x),$ 
\item $g(x)$ is real analytic on $[0,1]$ and 
\begin{equation}\label{eq2.2}
g\left(x\right)= \sum_{p=0}^\infty g_{2p+1}x^{2p+1}, 
\end{equation}
\noindent where
\begin{equation}\label{eq2.3}
g_1=\pi\z,\quad g_3=-\frac{\pi^3 \z (\z^2+1)}{6}\,,\quad g_5=\frac{\pi^5\z(\z^2+1)(9\z^2+1)}{120}\,,\;\ldots.
\end{equation}
\item $g(x)\ge C_0x$ on the segment $0\le x\le \frac{1}{2}\,$, with some $C_0>0$.
\end{enumerate}
The constant $C_0$ in the latter inequality can depend on $\z$. In what follows we assume that $\z$ is fixed and we do not indicate the dependence of various constants $C_k$  on $\z$. Unless otherwise is stated, the constants $C_k$ can be different in different inequalities.

Observe that since $g(x)$ is analytic at $x=0$, we have that
\begin{equation}\label{eq2.3a}
\left|g_{2p+1}\right|\le C \xi^p,
\end{equation}
with some $C,\xi>0$.

\subsection{Differential Operator $\De_p$}
Let $\mathcal S_p$ be the set of collections of positive integers \\ $(p_1,\ldots,p_r;q_1,\ldots,q_r)$, $1\le r\le p$, such that
\begin{equation}\label{eq2.4}
\begin{aligned}
\mathcal S_p=\left\{ 
(p_1,\ldots,p_r;q_1,\ldots,q_r)\;\big|\;
0<p_1<\ldots<p_r;\;  p_1q_1+\ldots+p_r q_r=p\right\}.
\end{aligned}
\end{equation}
Introduce the differential operator
\begin{equation}\label{eq2.5}
\begin{aligned}
\De_p=\sum_{\mathcal S_p}
\frac{(g_{2p_1+1})^{q_1}\ldots (g_{2p_r+1})^{q_r}}{q_1!\ldots q_r!}\,\frac{d^q}{d\la^q}\,,\quad
q=q_1+\ldots+q_r-1\,.
\end{aligned}
\end{equation}
Observe that 
\begin{equation}\label{eq2.6}
\Delta_1=g_3, \quad
\Delta_2=\frac{g^2_3}{2}\frac{d}{d\lambda}+g_5,\quad\Delta_3=\frac{g_3^3}{3!}\frac{d^2}{d\lambda^2}+g_3g_5\frac{d}{d\lambda}+g_7,\quad \ldots\ .
\end{equation}

\subsection{Kronecker's Double Series} The Kronecker double series of order $p$ with parameters $\al,\be$ is defined as
\begin{equation}\label{eq2.7a}
\begin{aligned}
K^{\al,\be}_p(\tau)=-\frac{p!}{(-2\pi i)^p}\sum_{(j,k)\not=(0,0)}\frac{e(j\al+k\be)}{(k+\tau j)^p}\,,
\end{aligned}
\end{equation}
where
\begin{equation}\label{eq2.7b}
\begin{aligned}
e(x)=e^{-2\pi ix}.
\end{aligned}
\end{equation}
We will use the following Kronecker double series with parameters $(\al,\be)=(\frac{1}{2},\frac{1}{2}), (0,\frac{1}{2}), (\frac{1}{2},0)$, respectively:
\begin{equation}\label{eq2.7}
\begin{aligned}
K^{\frac{1}{2},\frac{1}{2}}_p(\tau)=&-\frac{p!}{(-2\pi i)^p}\sum_{(j,k)\not=(0,0)}\frac{(-1)^{j+k}}{(k+\tau j)^p},\\
K^{0,\frac{1}{2}}_p(\tau)=&-\frac{p!}{(-2\pi i)^p}\sum_{(j,k)\not=(0,0)}\frac{(-1)^{k}}{(k+\tau j)^p},\\
K^{\frac{1}{2},0}_p(\tau)=&-\frac{p!}{(-2\pi i)^p}\sum_{(j,k)\not=(0,0)}\frac{(-1)^{j}}{(k+\tau j)^p}\,.
\end{aligned}
\end{equation}
We will use it for $\tau$ pure imaginary and $p\ge 4$. Then the double series are absolutely convergent.

\subsection{Dedekind Eta Function} The Dedekind eta function is defined as
\begin{equation}\label{eq2.8}
\begin{aligned}
\eta=\eta(\tau)=e^{\frac{\pi i\tau}{12}}\prod_{k=1}^\infty \left(1-e^{2\pi i\tau k}\right)
=q^{\frac{1}{12}}\prod_{k=1}^\infty \left(1-q^{2k}\right) ,
\end{aligned}
\end{equation}
where 
\begin{equation}\label{eq2.9}
q=e^{\pi i\tau}
\end{equation}
is the elliptic nome.

\subsection{Jacobi Theta Functions}
There are four Jacobi theta functions: 
\begin{equation}\label{eq2.10}
\begin{aligned}
\theta_1(z,q)&=2\sum_{k=0}^{\infty}(-1)^k q^{\left(k+\frac{1}{2}\right)^2} \sin\big((2k+1)z\big), \\
\theta_2(z,q)&=2\sum_{k=0}^{\infty}q^{\left(k+\frac{1}{2}\right)^2}\cos\big((2k+1)z\big), \\
\theta_3(z,q)&=1+2\sum_{k=1}^{\infty}q^{k^2}\cos(2k z), \\
\theta_4(z,q)&=1+2\sum_{k=1}^{\infty}(-1)^k q^{k^2} \cos(2k z),
\end{aligned}
\end{equation}
where $q=e^{\pi i\tau}$ is elliptic nome.

We have the following identities (see, e.g., \cite{Weber}):
\begin{equation}\label{dj}
\begin{aligned}
&\theta_2=\theta_2(0,q)=\frac{2\eta^2(2\tau)}{\eta(\tau)}\,,\\
&\theta_3=\theta_3(0,q)=\frac{\eta^5(\tau)}{\eta^2(2\tau)\eta^2(\frac{\tau}{2})}\,,\\
&\theta_4=\theta_4(0,q)=\frac{\eta^2(\frac{\tau}{2})}{\eta(\tau)}\,.
\end{aligned}
\end{equation}

Also (see, e.g., \cite{IIH}),
\begin{equation}\label{KtoT12}
\begin{aligned}
K_4^{0,\frac{1}{2}}(\tau)&=\frac{1}{30}\left(\frac{7}{8}\theta_2^8-\theta_3^4\theta_4^4\right)\,,\\
K_4^{\frac{1}{2},0}(\tau)&=\frac{1}{30}\left(\frac{7}{8}\theta_4^8-\theta_2^4\theta_3^4\right)\,,\\
K_4^{\frac{1}{2},\frac{1}{2}}(\tau)&=\frac{1}{30}\left(\frac{7}{8}\theta_3^8+\theta_2^4\theta_4^4\right)\,.\\
\end{aligned}
\end{equation}

\section{Main Result: Full Asymptotic Expansion of the Dimer Model Partition Function}

\subsection{Pfaffians}
We would like to evaluate the asymptotic expansion of the dimer model partition function $Z$ on the square lattice, $\Ga_{m,n}$, of dimensions $m\times n$, with periodic boundary conditions where $m,n\to\infty$ under the assumption that there exist positive constants $C_2>C_1$ such that 
\begin{equation}\label{fbc1.9}
C_1\le\frac{m}{n}\le C_2.
\end{equation} 
As shown by Kasteleyn \cite{Kas1,Kas2,Kas3}, the partition function $Z$ can be written in terms of four Pfaffians as
\begin{equation}\label{tbc1}
Z=\frac{1}{2}\left(-\Pf A_1+\Pf A_2+\Pf A_3+\Pf A_4\right),
\end{equation}
where $A_1, A_2, A_3, A_4$ are the antisymmetric Kasteleyn matrices with periodic-periodic, periodic-antiperiodic, antiperiodic-periodic, and antiperiodic-antiperiodic boundary conditions, respectively. Their determinants are given by the double product formulae as
\begin{equation}\label{tbc2}
\begin{aligned}
&\det A_i=\prod_{j=0}^{\frac{m}{2}-1}\prod_{k=0}^{n-1} \left[4\left(\sin^2\frac{2\pi(j+\al_i)}{m}+ \z^2\sin^2\frac{2\pi(k+\be_i)}{n}
\right)\right],
\end{aligned}
\end{equation}
with
\begin{equation}\label{tbc3}
\begin{aligned}
(\al_1,\be_1)=(0,0),\quad (\al_2,\be_2)=(0,1/2),\quad (\al_3,\be_3)=(1/2,0),
\quad (\al_4,\be_4)=(1/2,1/2).
\end{aligned}
\end{equation}
These double product formulae are obtained by diagonalizing the matrices $A_i$ (see \cite{Kas1,McCoy,McCoyWu}). The Pfaffian of a square antisymmetric matrix $A$ is related to its determinant through the classical identity:
\begin{equation}\label{pfdet}
(\Pf A)^2=\det A.
\end{equation}
Observe that $\det A_1=0$ due to the factor $j=k=0$ in \eqref{tbc2}, hence
\begin{equation}\label{pfa1}
\Pf A_1=0,
\end{equation}
and for odd $n$, $\det A_2=0$, due to the factor $j=0$, $k=\frac{n-1}{2}\,$, hence
\begin{equation}\label{pfa2}
\Pf A_2=0, \quad \textrm{if $n$ is odd.}
\end{equation}
In addition, 
\begin{equation}\label{pfa3}
\Pf A_3=\Pf A_4, \quad \textrm{if $n$ is odd}
\end{equation}
(see \cite{BEP}).
As shown in \cite{KSW},\cite{BEP}, 
\begin{equation}
\begin{aligned}
&\Pf A_2>0,\quad \textrm{if $n$ is even},\\
&\Pf A_3>0,\quad \Pf A_4>0\quad \textrm{for all $n$},
\end{aligned}
\end{equation}
hence from \eqref{tbc2} we obtain that
\begin{equation}\label{pfai}
\begin{aligned}
&\Pf A_i=\prod_{j=0}^{\frac{m}{2}-1}\prod_{k=0}^{n-1} \left[4\left(\sin^2\frac{2\pi(j+\al_i)}{m}+ \z^2\sin^2\frac{2\pi(k+\be_i)}{n}
\right)\right]^{1/2}.
\end{aligned} 
\end{equation}
Combining \eqref{tbc1} with \eqref{pfa1}, \eqref{pfa2}, \eqref{pfa3},
we obtain that 
\begin{equation}\label{tbc1a}
\begin{aligned}
Z&=\frac{1}{2}\left(\Pf A_2+\Pf A_3+\Pf A_4\right),\quad \textrm{if $n$ is even},\\
Z&=\Pf A_3,\quad \textrm{if $n$ is odd}.
\end{aligned}
\end{equation}

\subsection{Main Result}
Before stating the main theorem, let us introduce some additional notations. Denote
\begin{equation}\label{ntT4}
S=mn,\qquad \nu=\frac{m}{n}\,.
\end{equation}
We set
\begin{equation}\label{ntT5}
\tau =\begin{dcases} i\z \nu ,& \text{if $n$ is even,}\\
\frac{i \z \nu}{2},& \text{if $n$ is odd,}\end{dcases}
\end{equation}
so that the elliptic nome is equal to
\begin{equation}\label{ntT6}
q=e^{\pi i\tau}=\begin{dcases}e^{-\pi\zeta \nu},& \text{if $n$ is even,}\\
e^{\frac{-\pi \z \nu}{2}},&  \text{if $n$ is odd.}\end{dcases}
\end{equation}
For brevity we also denote
\begin{equation}\label{ntT7}
\eta=\eta(\tau),\qquad \theta_{k}=\theta_{k}(0,q), \quad k=2,3,4,
\end{equation}
where $\eta(\tau)$ is the Dedekind eta function, and $\theta_k(z,q)$ are the Jacobi theta functions.
The main result is the following asymptotic expansion of the partition function $Z$ in powers of $S^{-1}$, derived by Ivashkevich et al. in \cite{IIH} in the case $\z=1$ and $n$ is even. We give a complete rigorous proof of the asymptotic expansion for any $\z>0$ and for $n$ both even and odd.

\begin{theo} \label{main_thmT_TBC}
If $n$ is even, then as $ m,n\to\infty$ under condition \eqref{fbc1.9}, we have that
\begin{equation}\label{tcT,1.2}
\begin{aligned}
&Z= e^{SF}\left(C^{(2)}e^{R^{(2)}}+C^{(3)}e^{R^{(3)}}+C^{(4)}e^{R^{(4)}}\right),
\end{aligned}
\end{equation}
where
\begin{equation}\label{tcT,1.3a}
\begin{aligned}
&F=\frac{1}{\pi}\int\limits\limits_0^\z \frac{\arctan x}{x}\,dx,
\end{aligned}
\end{equation}
\begin{equation}\label{tcT,1.3b}
\begin{aligned}
&C^{(2)}= \frac{\theta_4^2}{2\eta^2},\quad 
C^{(3)}=\frac{\theta_2^2}{2\eta^2},\quad 
C^{(4)}=\frac{\theta_3^2}{2\eta^2},
\end{aligned}
\end{equation}
and $R^{(j)}$, $j=2,3,4$, admit the asymptotic expansions
\begin{equation}\label{tcT,1.3c}
\begin{aligned}
R^{(j)}\sim\sum_{p=1}^{\infty}\frac{R_{p}^{(j)}}{S^p}\,,\quad j=2,3,4,
\end{aligned}
\end{equation}
with
\begin{equation}\label{tcT,1.3d}
\begin{aligned}
&R_{p}^{(j)}=-\frac{2^{2p+1}\nu^{p+1}}{p+1}\,\De_p \left[K_{2p+2}^{\be_j,\al_j}
\left(\frac{i \nu\la}{\pi}\right)\right]\bigg|_{\la=\pi\z}\,,
\end{aligned}
\end{equation}
where $\al_j,\be_j$ are defined in \eqref{tbc3}.
In particular, by \eqref{eq2.6} and \eqref{KtoT12},
\begin{equation}\label{tcT,1.3e}
\begin{aligned}
&R_{1}^{(2)}=-\frac{2\nu^2g_3}{15}\bigg(\frac{7}{8}\theta_4^8-\theta_2^4\theta_3^4\bigg)\,,
\quad
R_{1}^{(3)}=-\frac{2\nu^2g_3}{15}\bigg(\frac{7}{8}\theta_2^8-\theta_3^4\theta_4^4\bigg)\,,\\
&R_{1}^{(4)}=-\frac{2\nu^2g_3}{15}\bigg(\frac{7}{8}\theta_3^8+\theta_2^4\theta_4^4\bigg)\,.
\end{aligned}
\end{equation}
Furthermore, if $n$ is odd, then as $ m,n\to\infty$ under condition \eqref{fbc1.9}, we have that 
\begin{equation}\label{tcT,1.2a}
\begin{aligned}
&Z= Ce^{SF+R},
\end{aligned}
\end{equation}
where $F$ is given in \eqref{tcT,1.3a},
\begin{equation}\label{tcT,1.3f}
\begin{aligned}
C=\frac{\theta_2}{\eta},
\end{aligned}
\end{equation}
and $R$  admits the asymptotic expansions
\begin{equation}\label{tcT,1.3g}
\begin{aligned}
R\sim\sum_{p=1}^{\infty}\frac{R_{p}}{S^p}\,,
\end{aligned}
\end{equation}
with
\begin{equation}\label{tcT,1.3h}
\begin{aligned}
R_{p}=-\frac{\nu^{p+1}}{p+1}\,\De_p \left[K_{2p+2}^{0,\frac{1}{2}}\left(\frac{i \nu\la}{2\pi}\right)\right]\bigg|_{\la=\pi\z}\,.
\end{aligned}
\end{equation}
By \eqref{eq2.6} and \eqref{KtoT12},
\begin{equation}\label{tcT,1.3i}
\begin{aligned}
&R_1=-\frac{\nu^2g_3}{60}\bigg(\frac{7}{8}\theta_2^8-\theta_3^4\theta_3^4\bigg)\,.
\end{aligned}
\end{equation}
\end{theo}

\begin{rem} As noticed by Kasteleyn \cite{Kas1}, the free energy $F$ in 
\eqref{tcT,1.3a} can be expressed in terms of the Euler dilogarithm function
\begin{equation}\label{dilog1}
{\rm L}_2(z)=-\int_0^z\frac{\ln(1-s)\,ds}{s}
\end{equation}
as
\begin{equation}\label{dilog2}
F(\z)=(2i)^{-1}\big[{\rm L}_2(i\z)-{\rm L}_2(-i\z)\big]\,.
\end{equation}
\end{rem}

The proof of Theorem \ref{main_thmT_TBC} will be given in Sections \ref{a2e}--\ref{a3o}.

\section{Asymptotic behavior of $\Pf A_2$  for even $n$}\label{a2e}

Since $\sin^2(x+\pi)=\sin^2x$, we can rewrite $\Pf A_2$ in \eqref{pfai} for even $n$ as
\begin{equation}\label{eq8.2}
\begin{aligned}
\Pf A_2=\prod_{j=0}^{\frac{m}{2}-1}\prod_{k=0}^{\frac{n}{2}-1} 
\left[4\left(\sin^2\frac{2j\pi}{m}+ \z^2\sin^2\frac{(2k+1)\pi}{n}\right)\right].
\end{aligned}
\end{equation}
Using the Chebyshev type identity (see e.g. \cite{Kas1}),
\begin{equation}\label{eq9.2} 
\prod_{j=0}^{\frac{m}{2}-1} \left[4\left( u^2+ \sin^2\frac{2j\pi}{m}\right)\right]=
\left[\left(u+\sqrt{1+u^2}\right)^{\frac{m}{2}}-\left(-u+\sqrt{1+u^2}\right)^{\frac{m}{2}}\right]^2\,,
\end{equation}
equation \eqref{eq8.2} is reduced to
\begin{equation}\label{eq9.3}
\begin{aligned}
&\Pf A_2=\prod_{k=0}^{\frac{n}{2}-1} 
\left[\left(u_{k}+\sqrt{1+u_{k}^2}\right)^{\frac{m}{2}}-\left(-u_{k}+\sqrt{1+u_{k}^2}\right)^{\frac{m}{2}}\right]^2\,,
\end{aligned}
\end{equation}
where
\begin{equation}\label{eq9.4}
u_{k}=\z\sin(\pi x_{k})\ge0, \quad x_{k}=\frac{2k+1}{n}\,.
\end{equation}
Observe that
\begin{equation}\label{eq9.5}
 \left(u_{k}+\sqrt{1+u_{k}^2}\right)\left(-u_{k}+\sqrt{1+u_{k}^2}\right)=1\,,
\end{equation}
hence
\begin{equation}\label{eq9.6}
\Pf A_2=B^{(2)}_{m,n}C^{(2)}_{m,n}\,,
\end{equation}
where
\begin{equation}\label{eq9.7}
\begin{aligned}
&B^{(2)}_{m,n}=\prod_{k=0}^{\frac{n}{2}-1} \left(u_{k}+\sqrt{1+u_{k}^2}\right)^m,\\
&C^{(2)}_{m,n}=\prod_{k=0}^{\frac{n}{2}-1} \left[1-\frac{1}{\left(u_{k}+\sqrt{1+u_{k}^2}\right)^m}\right]^2\,.
\end{aligned}
\end{equation}
Respectively,
\begin{equation}\label{eq9.8}
\ln (\Pf A_2)=G^{(2)}_{m,n}+H^{(2)}_{m,n}\,,
\end{equation}
with
\begin{equation}\label{eq9.9}
\begin{aligned}
G^{(2)}_{m,n}&=m\sum_{k=0}^{\frac{n}{2}-1} \ln\left(u_{k}+\sqrt{1+u_{k}^2}\right)
=m\sum_{k=0}^{\frac{n}{2}-1}  g\left(\frac{2k+1}{n}\right),\\
H^{(2)}_{m,n}&=2\sum_{k=0}^{\frac{n}{2}-1} \ln\left[1-\frac{1}{\left(u_{k}+\sqrt{1+u_{k}^2}\right)^m}\right].\\
\end{aligned}
\end{equation}
The function $g(x)$, defined in \eqref{eq2.1}, is real analytic, and
we will evaluate an asymptotic series expansion of $G^{(2)}_{m,n}$ for large $n$ by using an Euler--Maclaurin type formula and the Bernoulli polynomials $B_k(x)$ (see \cite{AS} or Appendix \ref{appD}).

\subsection{Evaluation of $G_{m,n}^{(2)}$} 
\begin{lem}\label{lemG}
As $n,m\to\infty$ under condition \eqref{fbc1.9}, we have that $G^{(2)}_{m,n}$ admits the following asymptotic expansion:
\begin{equation}\label{Lem_G2.1}
G^{(2)}_{m,n}\sim SF+\frac{\ga}{6}
-m\sum_{p=1}^\infty\frac{B_{2p+2}\left(\frac{1}{2}\right)g_{2p+1}}{(p+1)\left(\frac{n}{2}\right)^{2p+1}}\,,\quad \ga=\pi\nu\z.
\end{equation}
\end{lem}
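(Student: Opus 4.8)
The plan is to apply the Euler--Maclaurin summation formula to the sum
\[
G^{(2)}_{m,n}=m\sum_{k=0}^{\frac{n}{2}-1}g\!\left(\frac{2k+1}{n}\right),
\]
which is a sum of values of the analytic function $g$ at the \emph{midpoints} $x_k=\frac{2k+1}{n}$ of the intervals of a uniform partition of $[0,1]$ with step $h=\frac{2}{n}$. The midpoint sampling is the crucial structural feature: the midpoint version of the Euler--Maclaurin formula produces correction terms governed by the Bernoulli polynomials evaluated at $\frac12$, namely $B_{2p+2}(\frac12)$, which is exactly the shape of the claimed answer. Concretely, I would write $G^{(2)}_{m,n}=\frac{m}{h}\cdot h\sum_{k}g(x_k)$ with $h=\frac2n$, so that $h\sum_k g(x_k)$ is the midpoint-rule approximation to $\int_0^1 g(x)\,dx$, and $\frac{m}{h}=\frac{mn}{2}=\frac{S}{2}$.

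First I would identify the leading term. The integral $\int_0^1 g(x)\,dx$ multiplied by $\frac{S}{2}$ should reproduce $SF$; one checks that $\frac{1}{2}\int_0^1 g(x)\,dx = F$, where $F=\frac1\pi\int_0^\z \frac{\arctan x}{x}\,dx$ as in \eqref{tcT,1.3a}. This is a direct computation using the definition \eqref{eq2.1} of $g$ and a change of variables (the substitution $x\mapsto \pi x$ together with the antiderivative structure of $\arctan$), and uses the oddness/antiperiodicity properties $g(-x)=-g(x)$, $g(x+1)=-g(x)$ to relate the integral over $[0,1]$ to the stated form. Next I would extract the constant $\frac{\ga}{6}$ with $\ga=\pi\nu\z$: this should arise from the \emph{first} correction term of the midpoint Euler--Maclaurin expansion, which is proportional to $B_2(\frac12)\,[g'(1)-g'(0)]$ (a boundary term). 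Using $B_2(\frac12)=-\frac{1}{12}$, $g'(0)=g_1=\pi\z$, and the antiperiodicity $g'(1)=g'(0)$ \textemdash\ more precisely the relevant boundary contribution evaluated via the derivatives of $g$ at the endpoints \textemdash\ I expect the $p=0$ midpoint correction to collapse to a single $m$-independent-of-$n$ constant equal to $\frac{\pi\nu\z}{6}=\frac\ga6$. I would verify the bookkeeping here carefully, since the factor $\frac{m}{h}$ turns an $O(h^2)$ correction into an $O(1)$ constant.

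Then I would produce the infinite sum of correction terms. Applying the higher-order midpoint Euler--Maclaurin formula to the analytic function $g$ on $[0,1]$ with step $h=\frac2n$ gives correction terms of the form
\[
-\frac{m}{h}\sum_{p\ge 1}\frac{B_{2p+2}(\frac12)}{(2p+2)!}\,h^{2p+2}\big[g^{(2p+1)}(1)-g^{(2p+1)}(0)\big].
\]
Here I would use the Taylor expansion \eqref{eq2.2}, $g(x)=\sum_p g_{2p+1}x^{2p+1}$, to evaluate the odd-order derivatives at the endpoints. Because $g$ is odd and antiperiodic, $g^{(2p+1)}(1)-g^{(2p+1)}(0)$ should reduce to a clean multiple of $g_{2p+1}$ (I anticipate $g^{(2p+1)}(0)=(2p+1)!\,g_{2p+1}$ and an endpoint relation that doubles or signs this appropriately), and after inserting $h=\frac2n$, $\frac{m}{h}=\frac S2$, and simplifying the factorials against $(2p+2)!$ I expect to land exactly on $-m\,\dfrac{B_{2p+2}(\frac12)\,g_{2p+1}}{(p+1)\,(n/2)^{2p+1}}$.

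The main obstacle will be making the Euler--Maclaurin expansion rigorous as a genuine asymptotic series with controlled remainder, rather than a formal manipulation. Since $g$ is only real analytic on $[0,1]$ with a finite radius of convergence, I cannot simply sum the series; instead I would truncate at order $P$, use the integral form of the Euler--Maclaurin remainder (with periodized Bernoulli functions $\tilde B_{2P+2}$), and bound it uniformly in $n$ using the coefficient growth estimate \eqref{eq2.3a}, $|g_{2p+1}|\le C\xi^p$, to show the remainder is $O(m\,n^{-(2P+1)})=O(S\cdot n^{-(2P+2)})$ under the balance condition \eqref{fbc1.9}. Assembling the endpoint evaluations, the leading integral term, and the $p=0$ constant into the stated form, with the remainder estimate, completes the proof of Lemma \ref{lemG}.
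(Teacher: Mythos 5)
Your approach is exactly the paper's: apply the Euler--Maclaurin formula with midpoint offset $\alpha=\tfrac12$ and step $h=\tfrac2n$ on $[0,1]$, evaluate the endpoint derivatives from the Taylor expansion of $g$ together with its antiperiodicity, and identify $\tfrac{mn}{2}\int_0^1 g(x)\,dx=SF$ via Kasteleyn's identity. Two sign slips need fixing before the bookkeeping closes, and one of them would actually derail the computation if taken literally. First, $g(x+1)=-g(x)$ gives $g^{(p)}(1)=-g^{(p)}(0)$ for \emph{every} $p$, so $g'(1)=-g'(0)=-\pi\zeta$; your claim that ``$g'(1)=g'(0)$'' would make $B_2(\tfrac12)\,[g'(1)-g'(0)]$ vanish and lose the constant $\tfrac{\gamma}{6}$ entirely. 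With the correct relation one has $g^{(2p+1)}(1)-g^{(2p+1)}(0)=-2\,(2p+1)!\,g_{2p+1}$, and the $p=0$ term gives $m\cdot\frac{B_2(1/2)}{2!}\cdot\frac{2}{n}\cdot(-2\pi\zeta)=\frac{\pi\nu\zeta}{6}$, as required. Second, the Euler--Maclaurin correction sum enters with a \emph{plus} sign (as in \eqref{em5}); the overall minus sign in \eqref{Lem_G2.1} is produced by the factor $-2$ just mentioned, not by a minus in the formula itself, so the extra minus you placed in front of your correction series would flip the sign of every term. Once these are repaired the computation lands on the stated expansion exactly as in the paper. Your remarks on controlling the remainder by truncating at order $P$ and using \eqref{eq2.3a} are sound and in fact more explicit than the paper, which simply invokes \eqref{em5} as an asymptotic series with the remainder \eqref{em4}.
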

\begin{proof}
\noindent From \eqref{eq9.9} we have that
\begin{equation}\label{eq9.10}
G_{m,n}^{(2)}=m\,G_{n}^{(2)},\quad G_{n}^{(2)}=\sum_{k=0}^{\frac{n}{2}-1} g\left(\frac{2k+1}{n}\right)\,.
\end{equation}
Using the Euler-Maclaurin formula \eqref{em5}, we obtain that $G_n^{(2)}$ is expanded in the asymptotic series in powers of $\frac{1}{n}$ as
\begin{equation}\label{eq9.11}
\begin{aligned}
G_{n}^{(2)}&\sim \frac{n}{2}\int\limits\limits_0^1 g(x)\,dx
+\sum_{p=1}^\infty\frac{B_{p}\left(\frac{1}{2}\right)}{\left(\frac{n}{2}\right)^{p-1}p!}[g^{(p-1)}(1)-g^{(p-1)}(0)].
\end{aligned}
\end{equation}
From  \eqref{eq2.2} and the equation $g(x+1)=-g(x)$ we obtain that 
\begin{equation}\label{eq9.12}
\begin{aligned}
&g^{(2p)}(0)=g^{(2p)}(1)=0,\\
&g^{(2p+1)}(0)=(2p+1)! g_{2p+1}, \quad
g^{(2p+1)}(1)=-(2p+1)!g_{2p+1}.
\end{aligned}
\end{equation}
Now, \eqref{eq9.11} becomes 
\begin{equation}\label{eq9.13}
\begin{aligned}
G_{n}^{(2)}&\sim \frac{n}{2}\int\limits\limits_0^1 g(x)\,dx
+\sum_{p=1}^\infty\frac{B_{2p}\left(\frac{1}{2}\right)}{\left(\frac{n}{2}\right)^{2p-1}(2p)!}[g^{(2p-1)}(1)-g^{(2p-1)}(0)]\\
&\sim \frac{n}{2}\int\limits\limits_0^1 g(x)\,dx
-\sum_{p=1}^\infty\frac{2B_{2p}\left(\frac{1}{2}\right)(2p-1)!g_{2p-1}}{\left(\frac{n}{2}\right)^{2p-1}(2p)!}\\
&\sim \frac{n}{2}\int\limits\limits_0^1 g(x)\,dx
-\sum_{p=0}^\infty\frac{B_{2p+2}\left(\frac{1}{2}\right)g_{2p+1}}{(p+1)\left(\frac{n}{2}\right)^{2p+1}}.
\end{aligned}
\end{equation}
Substituting \eqref{eq9.13} into \eqref{eq9.10}, we obtain that
\begin{equation}\label{eq9.14}
\begin{aligned}
G_{m,n}^{(2)}&\sim \frac{mn}{2}\int\limits\limits_0^1 g(x)\,dx
-m\sum_{p=0}^\infty\frac{B_{2p+2}\left(\frac{1}{2}\right)g_{2p+1}}{(p+1)\left(\frac{n}{2}\right)^{2p+1}}\,.\\
\end{aligned}
\end{equation}
Since $B_2(\frac{1}{2})=-\frac{1}{12}$ and $g_1=\pi\z$, we obtain that
\begin{equation}\label{eq9.15}
\begin{aligned}
G_{m,n}^{(2)}\sim SF+\frac{\ga}{6}
-m\sum_{p=1}^\infty\frac{B_{2p+2}\left(\frac{1}{2}\right)g_{2p+1}}{(p+1)\left(\frac{n}{2}\right)^{2p+1}}\,,\quad \ga=\pi\nu\z.
\end{aligned}
\end{equation}
where 
\begin{equation}\label{eq9.16}
\begin{aligned}
F=\frac{1}{2}\int\limits_0^1\ln\left(\z\sin(\pi x)+\sqrt{1+\z^2\sin^2(\pi x)}\right)\,dx\,.
\end{aligned}
\end{equation}
As shown by Kasteleyn \cite{Kas1},
\begin{equation}\label{eq9.17}
\begin{aligned}
\frac{1}{2}\int\limits_0^1\ln\left(\z\sin(\pi x)+\sqrt{1+\z^2\sin^2(\pi x)}\right)\,dx=
\frac{1}{\pi}\int\limits\limits_0^\z \frac{\arctan x}{x}\,dx,
\end{aligned}
\end{equation}
hence Lemma \ref{lemG} follows.
\end{proof}

Next, we evaluate $H^{(2)}_{m,n}$ in \eqref{eq9.9}.

\subsection{Evaluation of $H_{m,n}^{(2)}$} 
\begin{lem}\label{lem3.2}
As $n,m\to\infty$ under condition \eqref{fbc1.9}, we have that $H^{(2)}_{m,n}$ admits the following asymptotic expansion:
\begin{equation}\label{Lem_H2.1}
H^{(2)}_{m,n}\sim A^{(2)}+B^{(2)},
\end{equation}
with
\begin{equation}\label{Lem_H2.2}
\begin{aligned}
&A^{(2)}=4\sum_{k=0}^{\infty}\ln\left(1-e^{- \ga \left(2k+1\right)}\right),\quad \ga=\pi\nu\z,\\
&B^{(2)}=m\sum_{p=1}^\infty\frac{B_{2p+2}\left(\frac{1}{2}\right)\,g_{2p+1}}{(p+1)\left(\frac{n}{2}\right)^{2p+1}}\,+\sum_{p=1}^\infty \frac{R^{(2)}_p}{S^p}\,,
\end{aligned}
\end{equation}
where
\begin{equation}\label{Lem_H2.3}
R_p^{(2)}=-\frac{2^{2p+1}\nu^{p+1}}{p+1}\,\De_p 
\left[K_{2p+2}^{\frac{1}{2},0}
\left(\frac{i \nu\la}{\pi}\right)\right]\bigg|_{\la=\pi\z}\,.
\end{equation}
\end{lem}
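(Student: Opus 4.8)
The plan is to expand the finite sum
$$H^{(2)}_{m,n}=2\sum_{k=0}^{\frac{n}{2}-1}\ln\Big[1-e^{-mg(x_k)}\Big],\qquad x_k=\frac{2k+1}{n},$$
which is \eqref{eq9.9} rewritten via the identity $u_k+\sqrt{1+u_k^2}=e^{g(x_k)}$. First I would isolate the two boundary layers near $x=0$ and $x=1$. By properties (1)--(2) of $g$ one has $g(1-x)=g(x)$, while the nodes are symmetric about $\tfrac12$ (indeed $x_{\frac{n}{2}-1-k}=1-x_k$); hence the near-$0$ and near-$1$ contributions coincide, and it suffices to study $\sum_{k\ge0}\ln[1-e^{-mg(x_k)}]$ with $x_k$ small, the final answer carrying the prefactor $2\cdot2=4$. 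Property (4) together with $g(1-x)=g(x)$ gives $g(x)\ge C_0\min(x,1-x)$ on $[0,1]$, so the terms with $x_k$ bounded away from $0$ and $1$ are $O(e^{-cm})$ and contribute beyond all orders; this reduces everything to the boundary layer, where the Taylor expansion \eqref{eq2.2} is available.

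In the boundary layer I would write $mg(x_k)=\nu\la(2k+1)+\nu W_k$, with $\la=\pi\z$ regarded as a free parameter and $W_k=\sum_{p\ge1}g_{2p+1}(2k+1)^{2p+1}n^{-2p}$, and Taylor expand $F(s)=\ln(1-e^{-s})$ about $s_0=\nu\la(2k+1)$. Since $\frac{d}{d\la}=\nu(2k+1)\frac{d}{ds_0}$ on $F(\nu\la(2k+1))$, each $s$-derivative becomes a $\la$-derivative, giving
$$\ln[1-e^{-mg(x_k)}]=\sum_{j\ge0}\frac{W_k^{\,j}}{j!\,(2k+1)^{j}}\,\frac{d^{j}}{d\la^{j}}\ln\!\big[1-e^{-\nu\la(2k+1)}\big].$$
Expanding $W_k^{\,j}/(2k+1)^{j}$ multinomially and collecting the terms of fixed order $n^{-2P}$ reproduces exactly the combinatorial data of $\mathcal S_P$ in \eqref{eq2.4}: a monomial $\prod_i g_{2p_i+1}^{q_i}$ with $\sum_i p_iq_i=P$ is accompanied by $(2k+1)^{2P}$, by $1/\prod_i q_i!$, and by $\frac{d^{Q}}{d\la^{Q}}$ with $Q=\sum_i q_i$. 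Summing over $k$ and setting $\Theta_P(\la)=\sum_{k\ge0}(2k+1)^{2P}\ln[1-e^{-\nu\la(2k+1)}]$, $\Xi_P=\Theta_P'$, the $P=0$ term gives $4\,\Theta_0(\pi\z)=A^{(2)}$, while for $P\ge1$ the factor $\frac{d^{Q}}{d\la^{Q}}\Theta_P=\frac{d^{Q-1}}{d\la^{Q-1}}\Xi_P$ assembles precisely the operator $\De_P$ of \eqref{eq2.5} (the shift $Q\mapsto Q-1$ matches $q=q_1+\dots+q_r-1$). Thus, up to beyond-all-orders errors,
$$H^{(2)}_{m,n}\sim 4\,\Theta_0(\pi\z)+4\sum_{P\ge1}\frac{1}{n^{2P}}\,\De_P\big[\Xi_P\big]\Big|_{\la=\pi\z}.$$

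The remaining, computational, step is to identify $\Xi_P$ with a Kronecker series. Writing $\Xi_P(\la)=\nu\sum_{a\ \mathrm{odd}\ge1}a^{2P+1}/(e^{\nu\la a}-1)$ and applying the Lipschitz summation formula to the $q$-expansion of \eqref{eq2.7} with $\tau=i\nu\la/\pi$ (so $\la=\pi\z$ gives $\tau=i\nu\z$ as in \eqref{ntT5}), I expect the identity
$$\Xi_P(\la)=\frac{2^{2P-1}\nu}{P+1}\Big[B_{2P+2}\big(\tfrac12\big)-K_{2P+2}^{\frac12,0}\big(\tfrac{i\nu\la}{\pi}\big)\Big],$$
where the constant $B_{2P+2}(\tfrac12)$ is the $q\to0$ limit of $K_{2P+2}^{\frac12,0}$ (one checks $K_4^{\frac12,0}\to\tfrac{7}{240}=B_4(\tfrac12)$ from \eqref{KtoT12}). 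Substituting this and using that $\De_P[\text{const}]=g_{2P+1}\cdot\text{const}$ (only the $Q=1$ term acts on a constant), the constant piece reconstructs the Bernoulli sum $m\sum_p\frac{B_{2p+2}(\frac12)g_{2p+1}}{(p+1)(n/2)^{2p+1}}$ of $B^{(2)}$ (which will cancel the matching term of $G^{(2)}_{m,n}$ in Lemma \ref{lemG}), while the $K$-piece gives $\sum_p R_p^{(2)}/S^p$; the power bookkeeping $\nu^{p+1}/(mn)^p=m/n^{2p+1}$ produces the constant $-2^{2p+1}\nu^{p+1}/(p+1)$ of \eqref{Lem_H2.3}. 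The case $P=1$ can be checked directly against \eqref{tcT,1.3e}.

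The main obstacle is not the formal expansion but its rigorous control. The correction $\nu W_k$ is \emph{not} uniformly small in $k$ (since $2k+1$ ranges up to $\sim n$), so the Taylor expansion of $F$ must be combined with the exponential decay $e^{-\nu\la(2k+1)}$ to bound each step, and the estimate \eqref{eq2.3a}, $|g_{2p+1}|\le C\xi^p$, is needed both to sum the $\mathcal S_P$-series and to control the tails in $P$. Concretely, I would fix the truncation order $M$, bound the remainders of the $F$-expansion and of the multinomial reorganization uniformly in $k$, show that truncating at $P=M$ leaves an error $O(n^{-2M})$ after summation over $k$, and verify that the bulk region and the extension of the $k$-sum to infinity each cost only $O(e^{-cm})$. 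Turning these heuristics into uniform estimates valid under \eqref{fbc1.9}, with the honest remainder bound, is the technical heart of the argument.
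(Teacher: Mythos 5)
Your proposal is correct and follows essentially the same route as the paper: the same symmetry reduction to the boundary layer with the overall factor $4$, the same reorganization of the double expansion into the operator $\De_p$ acting on a lattice sum in $\la$, and the same identification of that sum with $K_{2p+2}^{\frac{1}{2},0}$ via \eqref{kr10}, the Bernoulli terms emerging from the free term $g_{2p+1}$ of $\De_p$. The only differences are organizational — you Taylor-expand $\ln(1-e^{-s})$ in the perturbation $\nu W_k$ rather than first expanding the logarithm into the geometric $j$-series as in \eqref{eq9.22} (the $j$-sum reappears anyway in your $\Xi_P$) — and the uniform remainder bound you defer as "the technical heart" is exactly what the paper's Lemma \ref{error_estimate} supplies, via a Cauchy-integral estimate of the multinomial coefficients $b_p$ using \eqref{eq2.3a}.
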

\begin{proof}
From \eqref{eq9.9}, \eqref{eq9.4}, and \eqref{eq2.1} we have that 
\begin{equation}\label{eq9.18}
H_{m,n}^{(2)}=2\sum_{k=0}^{\frac{n}{2}-1} \ln\left[1-e^{-m g (x_{k})}\right],\quad x_k=\frac{2k+1}{n}\,.
\end{equation}
Since $g(x)\ge C_0x$ on the segment $[0,\frac{1}{2}]$, for some $C_0>0$, we have that
\begin{equation}\label{eq9.19}
e^{-mg(x_{k})}\le e^{-C_0 \nu (2k+1)}\,,\quad \nu=\frac{m}{n}\,,
\end{equation}
hence the sum in \eqref{eq9.18} is estimated from above by a geometric series, and  for any $L>0$ there is $R>0$ such that
\begin{equation}\label{eq9.21}
H_{m,n}^{(2)}=4\sum_{k=0}^{R\ln n}  \ln\left[1-e^{-mg (x_{k})}\right]+\mathcal O(n^{-L}),
\end{equation}
so that in our calculations we can restrict $k$ to $k\le R\ln n$.\\

Following \cite{IIH}, let us expand the logarithm in \eqref{eq9.21} into the Taylor series
\begin{equation}\label{eq9.22}
H_{m,n}^{(2)}=-4\sum_{k=0}^{R\ln n}  \sum_{j=1}^\infty \frac{e^{-mjg (x_{k})}}{j}+\mathcal O(n^{-L}).
\end{equation}
Observe that
\begin{equation}\label{eq9.22a}
mjg (x_{k})\ge C_0(2k+1)j,
\end{equation}
hence the series in $j$ converges exponentially and for any $L>0$ there is $R>0$ such that 
\begin{equation}\label{eq9.22b}
H_{m,n}^{(2)}=-4\sum_{k=0}^{R\ln n}  \sum_{j=1}^{R\ln n} \frac{e^{-mjg (x_{k})}}{j}+\mathcal O(n^{-L}).
\end{equation}
Expanding now $g(x)$ into power series \eqref{eq2.2}, we obtain that
\begin{equation}\label{eq9.23}
\begin{aligned}
e^{-mjg(x_{k})}&=e^{-mj \pi\z x_{k}} \exp\left[-mj\left(\sum_{p=1}^\infty g_{2p+1} x^{2p+1}_{k}\right)\right].
\end{aligned}
\end{equation}
Since $S=mn$ and $\nu=\frac{m}{n}$, we have that
$n^{2p}=\frac{S^p}{\nu^p}\,$. Hence,
\begin{equation}\label{eq9.24}
mx_{k}^{2p+1}=m\left(\frac{2k+1}{n}\right)^{2p+1}=\frac{\left(2k+1\right)^{2p+1}\nu^{p+1}}{S^p}\,
\end{equation}
and
\begin{equation}\label{eq9.25}
\begin{aligned}
e^{-mjg(x_{k})}
&=e^{-(2k+1)j \ga } \exp\left[-j\sum_{p=1}^\infty \frac{ g_{2p+1}\left(2k+1\right)^{2p+1}\nu^{p+1}}{S^p}\right], \quad \ga=\pi\nu\z.
\end{aligned}
\end{equation}
Denote
\begin{equation}\label{eq9.26}
a_p=- \left(2k+1\right)j\nu\, g_{2p+1},\quad x=\frac{\left(2k+1\right)^2\nu}{S}\,.
\end{equation}
Then formula \eqref{eq9.25} simplifies to
\begin{equation}\label{eq9.27}
\begin{aligned}
e^{-mjg(x_{k})}
&=e^{-(2k+1)j \ga } \exp\left(\sum_{p=1}^\infty a_p x^p\right).
\end{aligned}
\end{equation}
Substituting this expression into \eqref{eq9.22b}, we obtain that
\begin{equation}\label{eq9.22b0}
H_{m,n}^{(2)}=-4\sum_{k=0}^{R\ln n}  \sum_{j=1}^{R\ln n} 
\frac{e^{-(2k+1)j\ga} }{j}\,
 \exp\left(\sum_{p=1}^\infty a_p x^p\right)
+\mathcal O(n^{-L}).
\end{equation}
Expanding the exponent into the Taylor series, we obtain that
\begin{equation}\label{eq9.28}
\begin{aligned}
\exp\left(\sum_{p=1}^\infty a_p x^p\right)
=1+\sum_{p=1}^\infty b_p x^p,
\end{aligned}
\end{equation}
with
\begin{equation}\label{eq9.29}
\begin{aligned}
b_p= \sum_{\mathcal S_p}
\frac{(a_{p_1})^{q_1}\ldots (a_{p_r})^{q_r}}{q_1!\ldots q_r!}\,,
\end{aligned}
\end{equation}
where $\mathcal S_p$  is defined in \eqref{eq2.4}
(see \cite{IIH} and Appendix \ref{appA}). Thus,
\begin{equation}\label{eq9.22b1}
H_{m,n}^{(2)}=-4\sum_{k=0}^{R\ln n}  \sum_{j=1}^{R\ln n} 
\frac{e^{-(2k+1)j\ga} }{j}\,
\left(1+\sum_{p=1}^\infty b_p x^p\right)
 +\mathcal O(n^{-L}),
\end{equation}
or
\begin{equation}\label{eq9.33}
H_{m,n}^{(2)}= A^{(2)}_n+B^{(2)}_n+\mathcal O(n^{-L}),
\end{equation}
where
\begin{equation}\label{eq9.33a}
\begin{aligned}
&A^{(2)}_n=-4\sum_{k=0}^{R\ln n}  \sum_{j=1}^{R\ln n} 
\frac{e^{-(2k+1)j \ga }}{j}\,,\\
&B^{(2)}_n=-4\sum_{k=0}^{R\ln n}  \sum_{j=1}^{R\ln n} \sum_{p=1}^\infty
\frac{e^{-(2k+1)j\ga} }{j}\,
 b_p x^p,\quad x=\frac{\left(2k+1\right)^2\nu}{S}\,.
\end{aligned}
\end{equation}
We have that
\begin{equation}\label{eq9.33b}
A^{(2)}_n=-4\sum_{k=0}^\infty  \sum_{j=1}^\infty 
\frac{e^{-(2k+1)j \ga }}{j}+\mathcal O(n^{-L})=
4\sum_{k=0}^{\infty} \ln(1-e^{-(2k+1)\ga})+\mathcal O(n^{-L}).
\end{equation}
We write now $B_n^{(2)}$ as
\begin{equation}\label{eq9.33c}
\begin{aligned}
B^{(2)}_n&=B^{(2)}_{n,K}+R^{(2)}_{n,K},\quad 
B^{(2)}_{n,K}=
-4\sum_{k=0}^{R\ln n}  \sum_{j=1}^{R\ln n} \sum_{p=1}^{K-1}
\frac{e^{-(2k+1)j\ga} }{j}\,b_p x^p,\\
 R^{(2)}_{n,K}&=-4\sum_{k=0}^{R\ln n}  \sum_{j=1}^{R\ln n} \sum_{p=K}^{\infty}
\frac{e^{-(2k+1)j\ga} }{j}\,b_p x^p,
\end{aligned}
\end{equation}
and we would like to estimate the error term $R^{(2)}_{n,K}$.
To that end we will prove the following lemma:

\begin{lem} \label{error_estimate} {\rm (Error term estimate)} Fix any $\ep>0$.
Then as $S\to \infty$,
\begin{equation}\label{sl1}
\begin{aligned}
R^{(2)}_{n,K}=\mathcal O(S^{-K(1-\ep)})\,.
\end{aligned}
\end{equation}
\end{lem}

\begin{rem} Remind that $S=mn=\nu n^2$, where $C_1\le \nu\le C_2$, hence $S\to\infty$ implies that $n\to\infty$.
\end{rem}

\begin{proof} 
Let us estimate $b_p$.
From \eqref{eq9.26} and \eqref{eq2.3a} we have that
\begin{equation}\label{sl6}
|a_p|=(2k+1)j\nu|g_{2p+1}|\le C_1 (2k+1)j \xi^p,
\end{equation}
hence
\begin{equation}\label{sl7}
\left| \sum_{p=1}^\infty  a_p z^p\right|
\le C_2 (2k+1)j|z|,\quad |z|\le (2\xi)^{-1}, \quad z\in\C.
\end{equation}
This implies that
\begin{equation}\label{sl8}
\left| \sum_{p=1}^\infty  a_p z^p\right|
\le C_2,\quad \textrm{if}\quad |z|\le \min\big\{(2\xi)^{-1},[(2k+1)j]^{-1}\big\}, \quad z\in\C,
\end{equation}
hence 
\begin{equation}\label{sl9}
\left|\exp\left( \sum_{p=1}^\infty  a_p z^p\right)\right|
\le C_3=e^{C_2} ,\quad  \textrm{if}\quad |z|\le \min\big\{(2\xi)^{-1},[(2k+1)j]^{-1}\big\}, \quad z\in\C.
\end{equation}
By the Cauchy integral formula, 
\begin{equation}\label{sl10}
\frac{f^{(p)}(0)}{p!}=\frac{1}{2\pi i}\oint_{|z|=\rho}\frac{f(z)\,dz}{z^{p+1}}\,,
\end{equation}
applied to $f(z)=\exp\left( \sum_{p=1}^\infty  a_p z^p\right)$ and
$\rho=\min\big\{(2\xi)^{-1},[(2k+1)j]^{-1}\big\}$,
it follows that
\begin{equation}\label{sl11}
|b_p|\le C_3 \big[(2k+1)j\big]^p \quad \textrm{if}\quad (2k+1)j\ge \xi,
\end{equation}
and
\begin{equation}\label{sl12}
|b_p|\le C_3 \xi^p \quad \textrm{if}\quad (2k+1)j\le \xi.
\end{equation}
Using these estimates of $b_p$, we will now prove \eqref{sl1}.

As $n\to\infty$, we may assume that $R\ln n>\xi$, and we partition $R^{(2)}_{n,K}$ as follows: 
\begin{equation}\label{sl14}
\begin{aligned}
R^{(2)}_{n,K}&=
\sum_{j,k:\,(2k+1)j\le \xi}  
\left(\sum_{p=K}^\infty
\frac{e^{-(2k+1)j\ga} }{j}\,
 |b_p| x^p\right)\\
 &+\sum_{j,k:\,j,k\le R\ln n;\;(2k+1)j> \xi } 
 \left(\sum_{p=K}^\infty \frac{e^{-(2k+1)j\ga} }{j}\,|b_p| x^p \right),
 \quad x=\frac{\left(2k+1\right)^2\nu}{S}\,.
 \end{aligned}
\end{equation}
In the first term there are only finitely many possible values of
$j$ and $k$, and by \eqref{sl12},
\begin{equation}\label{sl15}
\begin{aligned}
\sum_{j,k:\,(2k+1)j\le \xi}  
\left(\sum_{p=K}^\infty
\frac{e^{-(2k+1)j\ga} }{j}\,
 |b_p| x^p\right)\le \sum_{p=K}^\infty (C_4S^{-1})^p
 =\mathcal O(S^{-K(1-\ep)})\,.
 \end{aligned}
\end{equation}
Consider now the second term in \eqref{sl14}. Using estimate \eqref{sl11}, we obtain that
\begin{equation}\label{sl18}
\begin{aligned}
\sum_{j,k:\,j,k\le R\ln n;\;(2k+1)j> \xi }
\left(\sum_{p=K}^\infty
\frac{e^{-(2k+1)j\ga} }{j}\,
 |b_p| x^p \right)\le \sum_{p=K}^\infty \frac{c_p}{S^p}\,,
 \end{aligned}
\end{equation}
with
\begin{equation}\label{sl19}
\begin{aligned}
0<c_p\le 
C_6\sum_{k=0}^{R\ln n}  
\sum_{j=1}^{R\ln n}
e^{-(2k+1)j\ga}\big[(2k+1)^3j\nu\big]^p\le \big[C_7 \nu(R\ln n)^4\big]^p,
 \end{aligned}
\end{equation}
hence
\begin{equation}\label{sl20}
\begin{aligned}
\sum_{p=K}^\infty \frac{c_p}{S^p}\le \sum_{p=K}^\infty 
\left[\frac{C_7 (R\ln n)^4}{n^2}\right]^{p}=\mathcal O(S^{-K(1-\ep)})\,.
 \end{aligned}
\end{equation}
Thus,
\begin{equation}\label{sl21}
\begin{aligned}
\sum_{j,k:\,j,k\le R\ln n;\;(2k+1)j> \xi }
\left(\sum_{p=K}^\infty
\frac{e^{-(2k+1)j\ga} }{j}\,
 |b_p| x^p \right)=\mathcal O(S^{-K(1-\ep)})\,,
 \end{aligned}
\end{equation}
and \eqref{sl1} is proved.
\end{proof}

Next, we would like to replace $R\ln n$ in $B^{(2)}_{n,K}$ in \eqref{eq9.33c} by $\infty$. Denote
\begin{equation}\label{sl22}
\begin{aligned}
B^{(2)}(p)=
-4\sum_{k=0}^{\infty}  \sum_{j=1}^{\infty} 
\frac{e^{-(2k+1)j\ga} }{j}\,b_p x^p,\quad
B_n^{(2)}(p)=
-4\sum_{k=0}^{R\ln n}  \sum_{j=1}^{R\ln n} 
\frac{e^{-(2k+1)j\ga} }{j}\,b_p x^p.
\end{aligned}
\end{equation}
Then using estimate \eqref{sl11} of $b_p$, we obtain that
\begin{equation}\label{sl23}
\begin{aligned}
|B^{(2)}(p)-B^{(2)}_{n}(p)|&=
4\left|\sum_{j,k:\;j>0,\;k\ge 0;\;\max\{j,k\}>R\ln n}  \frac{e^{-(2k+1)j\ga} }{j}\,
b_p x^p\right|\\
&\le
C_0 S^{-p}\left|\sum_{j,k:\;j>0,\;k\ge 0;\;\max\{j,k\}>R\ln n}  \frac{e^{-(2k+1)j\ga} }{j}\,
[(2k+1)^3j\nu]^p \right|.
\end{aligned}
\end{equation}
We have that
\begin{equation}\label{sl24}
\begin{aligned}
\sum_{k=R\ln n}^\infty k^{3p} e^{-k\ga}\le C_1(p) n^{-R\ga/2},
\end{aligned}
\end{equation}
hence from \eqref{sl23} we obtain that
\begin{equation}\label{sl25}
\begin{aligned}
|B^{(2)}(p)-B^{(2)}_{n}(p)|\le C_2(p) n^{-R\ga/2}.
\end{aligned}
\end{equation}

From Lemma \ref {error_estimate}, \eqref{eq9.33b},  and \eqref{sl25} we obtain an 
asymptotic expansion of $H_{m,n}^{(2)}$ in powers of $S^{-1}$ as
\begin{equation}\label{eq9.34b}
H_{m,n}^{(2)}\sim A^{(2)}+B^{(2)},
\end{equation}
with
\begin{equation}\label{eq9.34c}
\begin{aligned}
&A^{(2)}=4\sum_{k=0}^{\infty} \ln(1-e^{-(2k+1)\ga})\,,
\quad B^{(2)}=\sum_{p=1}^\infty d_p S^{-p},
\end{aligned}
\end{equation}
where
\begin{equation}\label{eq9.34e}
\begin{aligned}
d_p=-4\nu^p\sum_{k=0}^{\infty}  \sum_{j=1}^{\infty}
\frac{e^{-(2k+1)j\ga} b_p(2k+1)^{2p}}{j}\,.
\end{aligned}
\end{equation}
Here $b_p$ is given by equation \eqref{eq9.29}, and it satisfies estimate
\eqref{sl12}, which shows that the series over $k,j$ in the latter formula is convergent. We can transform $d_p$ as follows.

Substituting expression \eqref{eq9.26} for $a_{p_1},\ldots,a_{p_r}\,$
into \eqref{eq9.29}, we obtain that
\begin{equation}\label{eq9.30}
\begin{aligned}
b_p=\sum_{\mathcal S_p}
\frac{(g_{2p_1+1})^{q_1}\ldots (g_{2p_r+1})^{q_r}\left[-\left(2k+1\right)j\nu \right]^{q_1+\ldots+q_r}}{q_1!\ldots q_r!}\,.
\end{aligned}
\end{equation}
We can simplify the latter expression using operator $\Delta_p$ in \eqref{eq2.5}.
Namely, we have that
\begin{equation}\label{eq9.31a}
\begin{aligned}
\De_p \left[e^{- \left(2k+1\right)j\nu\la}\right]\Big|_{\la=\pi\z}
&=e^{-(2k+1)j \ga }\\
&\times \sum_{\mathcal S_p}
\frac{(g_{2p_1+1})^{q_1}\ldots (g_{2p_r+1})^{q_r}\left[-(2k+1)j\nu \right]^{q_1+\ldots+q_r-1}}{q_1!\ldots q_r!}\,,
\end{aligned}
\end{equation}
hence
\begin{equation}\label{eq9.32}
\begin{aligned}
b_p=e^{ \left(2k+1\right)j\ga}\left[-(2k+1)j\nu \right]
\De_p \left[e^{- \left(2k+1\right)j\nu\la}\right]\Big|_{\la=\pi\z}\,.
\end{aligned}
\end{equation}
Returning back to formula \eqref{eq9.34e}, we obtain that
\begin{equation}\label{eq9.34d}
\begin{aligned}
d_p=4\nu^{p+1}\sum_{k=0}^{\infty}  \sum_{j=1}^{\infty}
 (2k+1)^{2p+1}\,\De_p \left[e^{- \left(2k+1\right)j\nu\la}\right]\Big|_{\la=\pi\z}\,.
\end{aligned}
\end{equation}
To relate $d_p$ to the  Kronecker double series, introduce the function
\begin{equation}\label{eq9.35}
\begin{aligned}
F^{(2)}_p(\la)=\sum_{k=0}^{\infty}  
\sum_{j=1}^\infty \left(k+\frac{1}{2}\right)^{2p+1} e^{-2\nu j \la \left(k+\frac{1}{2}\right)}\,.
\end{aligned}
\end{equation}
Then
\begin{equation}\label{eq9.36}
\begin{aligned}
d_p=2^{2p+3}\nu^{p+1} \De_p \big[F^{(2)}_p(\la)\big]\big|_{\la=\pi\z}\,.
\end{aligned}
\end{equation}
The function $F^{(2)}_p(\la)$ can be expressed in terms of the Kronecker double series of a complex argument.
More precisely, from equation \eqref{kr10} in Appendix \ref{appF} we have that  
\begin{equation}\label{eq9.37}
F^{(2)}_p(\la)=\frac{B_{2p+2}\left(\frac{1}{2}\right)-K_{2p+2}^{\frac{1}{2},0}\left(\frac{i\nu\la}{\pi}\right)}{4(p+1)}.
\end{equation}
Furthermore, since the free term in the operator $\De_p$ in \eqref{eq2.5} is equal to $g_{2p+1}$, we obtain that
\begin{equation}\label{eq9.38}
\De_p F^{(2)}_p(\la)=\frac{B_{2p+2}\left(\frac{1}{2}\right)}{4(p+1)}g_{2p+1}-\frac{\De_p K_{2p+2}^{\frac{1}{2},0}\left(\frac{i\nu\la}{\pi}\right)}{4(p+1)},
\end{equation}
and therefore,
\begin{equation}\label{eq9.39}
\begin{aligned}
B^{(2)}&=\sum_{p=1}^\infty \frac{2^{2p+1}\nu^{p+1}B_{2p+2}\left(\frac{1}{2}\right)}{S^p(p+1)}g_{2p+1}-
\sum_{p=1}^\infty \frac{2^{2p+1}\nu^{p+1}}{S^p(p+1)}
\De_p 
\left[K_{2p+2}^{\frac{1}{2},0}
\left(\frac{i \nu\la}{\pi}\right)\right]\bigg|_{\la=\pi\z}\\
&=m\sum_{p=1}^\infty\frac{B_{2p+2}\left(\frac{1}{2}\right)}{(p+1)\left(\frac{n}{2}\right)^{2p+1}}g_{2p+1}+\sum_{p=1}^\infty \frac{R^{(2)}_p}{S^p}\,,
\end{aligned}
\end{equation}
(recall that $S=\nu n^2$),
and this completes the proof of Lemma \ref{lem3.2}.
\end{proof}
\subsection{Evaluation of $ \ln (\Pf A_2)$}
\text{}\\

\noindent To evaluate $\ln (\Pf A_2)$, substitute \eqref{Lem_G2.1} and \eqref{Lem_H2.1} into \eqref{eq9.8} to obtain
\begin{equation}\label{eq9.41}
\begin{aligned}
\ln (\Pf A_2)&\sim\bigg(SF+\frac{\ga}{6}
-m\sum_{p=1}^\infty\frac{B_{2p+2}\left(\frac{1}{2}\right)}{(p+1)\left(\frac{n}{2}\right)^{2p+1}}g_{2p+1}\bigg)\\
&+\bigg(4\sum_{k=0}^{\infty} \ln(1-e^{-(2k+1)\ga})+m\sum_{p=1}^\infty\frac{B_{2p+2}\left(\frac{1}{2}\right)}{(p+1)\left(\frac{n}{2}\right)^{2p+1}}g_{2p+1}+\sum_{p=1}^\infty \frac{R^{(2)}_p}{S^p}\bigg)\\
&= SF+\frac{\ga}{6}+\sum_{k=1}^{\infty}\ln\left(1-e^{- \left(2k-1\right)\ga}\right)^{4}+\sum_{p=1}^\infty \frac{R^{(2)}_p}{S^p}\,,
\quad \ga=\pi\nu\z.
\end{aligned}
\end{equation}
Note that the series containing the Bernoulli polynomials $B_{2p+2}(\frac{1}{2})$ cancel out. 
If we let the elliptic nome be equal to 
\begin{equation}\label{eq9.42}
q=e^{-\ga}\,,
\end{equation}
then by using \eqref{eq2.8} and \eqref{dj}, 
we obtain that
\begin{equation}\label{eq9.43}
\begin{aligned}
e^\frac{\ga}{6}\prod_{k=1}^\infty(1-e^{-(2k-1)\ga})^4&=q^{-1/6}\prod_{k=1}^{\infty}(1-q^{2k-1})^4=\left[q^{-1/24}\prod_{k=1}^{\infty}\frac{(1-q^{k})}{(1-q^{2k})}\right]^4\\
&=\left[\frac{q^{1/24}\prod_{k=1}^{\infty}(1-q^{k})}{q^{1/12}\prod_{k=1}^{\infty}(1-q^{2k})}\right]^4
=\left[\frac{\eta\left(\frac{\tau}{2}\right)}{\eta(\tau)}\right]^4=\frac{\theta_4^2}{\eta^2}\,.
\end{aligned}
\end{equation}
Therefore, equation \eqref{eq9.41} implies that
\begin{equation}\label{eq9.44}
\Pf A_2=e^{SF}\frac{\theta_4^2}{\eta^2}\,e^{R^{(2)}},
\quad R^{(2)}\sim 
\sum_{p=1}^\infty \frac{R^{(2)}_p}{S^p}\,.
\end{equation}

\section{Asymptotic expansions of $\Pf A_3$ and $\Pf A_4$ for even $n$}\label{a3e}

The asymptotic expansions of  $\Pf A_3$ and $\Pf A_4$
for even $n$ can be obtained in the same way as the one of $\Pf A_2$. 
Let us briefly discuss them.

From formula \eqref{pfai} we have that
\begin{equation}\label{pfa3a}
\begin{aligned}
&\Pf A_3=\prod_{j=0}^{\frac{m}{2}-1}\prod_{k=0}^{n-1} \left[4\left(\sin^2\frac{2\pi(j+\frac{1}{2})}{m}+ \z^2\sin^2\frac{2\pi k}{n}
\right)\right]^{1/2}.
\end{aligned} 
\end{equation}
Using that $\sin^2(x+\pi)=\sin^2x$, we can rewrite the latter formula for even $n$ as
\begin{equation}\label{pfa3b}
\begin{aligned}
&\Pf A_3=\prod_{j=0}^{\frac{m}{2}-1}\prod_{k=0}^{\frac{n}{2}-1} \left[4\left(\sin^2\frac{2\pi(j+\frac{1}{2})}{m}+ \z^2\sin^2\frac{2\pi k}{n}
\right)\right].
\end{aligned} 
\end{equation}
Using the Chebyshev type identity (see \cite{Kas1}),
\begin{equation}\label{eq3.2} 
\prod_{j=0}^{\frac{m}{2}-1} \left[4\left( u^2+ \sin^2\frac{(2j+1)\pi}{m}\right)\right]=
\left[\left(u+\sqrt{1+u^2}\right)^{\frac{m}{2}}+\left(-u+\sqrt{1+u^2}\right)^{\frac{m}{2}}\right]^2\,,
\end{equation}
we obtain that 
\begin{equation}\label{eq3.3}
\begin{aligned}
&\Pf A_3=\prod_{k=0}^{\frac{n}{2}-1} 
\left[\left(u_{k}+\sqrt{1+u_{k}^2}\right)^{\frac{m}{2}}+\left(-u_{k}+\sqrt{1+u_{k}^2}\right)^{\frac{m}{2}}\right]^2\,.
\end{aligned}
\end{equation}
where
\begin{equation}\label{eq3.4}
u_{k}=\z\sin(\pi x_{k})\ge0, \quad x_{k}=\frac{2k}{n}\,,
\end{equation}
which implies that
\begin{equation}\label{eq3.8}
\ln (\Pf A_3)=G^{(3)}_{m,n}+H^{(3)}_{m,n}\,,
\end{equation}
where
\begin{equation}\label{eq3.9}
\begin{aligned}
&G^{(3)}_{m,n}=m\sum_{k=0}^{\frac{n}{2}-1} \ln\left(u_{k}+\sqrt{1+u_{k}^2}\right)
=m\sum_{k=0}^{\frac{n}{2}-1}  g\left(\frac{2k}{n}\right),\\
&H^{(3)}_{m,n}=2\sum_{k=0}^{\frac{n}{2}-1} \ln\left[1+\frac{1}{\left(u_{k}+\sqrt{1+u_{k}^2}\right)^m}\right].\\
\end{aligned}
\end{equation}
Using the Euler--Maclaurin formula, we obtain  the following asymptotic expansion:
\begin{equation}\label{Lem_G2.2}
G^{(3)}_{m,n}\sim SF-\frac{\ga}{3}
-m\sum_{p=1}^\infty\frac{B_{2p+2}\left(0\right)g_{2p+1}}{(p+1)\left(\frac{n}{2}\right)^{2p+1}}\,, \quad \ga=\pi\nu\z.
\end{equation}
Next, we obtain an asymptotic expansion of $H^{(3)}_{m,n}$:
\begin{equation}\label{Lem_H3.1}
H^{(3)}_{m,n}\sim A^{(3)}+B^{(3)},
\end{equation}
with
\begin{equation}\label{Lem_H3.2}
\begin{aligned}
&A^{(3)}=4\sum_{k=1}^{\infty}\ln\left(1+e^{-2 k\ga}\right)+2\ln2,
\\
&B^{(3)}=m\sum_{p=1}^\infty\frac{B_{2p+2}\left(0\right)}{(p+1)\left(\frac{n}{2}\right)^{2p+1}}g_{2p+1}+\sum_{p=1}^\infty \frac{R^{(3)}_p}{S^p}\,,
\end{aligned}
\end{equation}
where
\begin{equation}\label{Lem_H3.3}
R_p^{(3)}=-\frac{2^{2p+1}\nu^{p+1}}{p+1}\,\De_p 
\left[K_{2p+2}^{0,\frac{1}{2}}
\left(\frac{i \nu\la}{\pi}\right)\right]\bigg|_{\la=\pi\z}\,.
\end{equation}
Substituting \eqref{Lem_G2.2} and \eqref{Lem_H3.1} into \eqref{eq3.8} we obtain that
\begin{equation}\label{eq3.41}
\begin{aligned}
\ln (\Pf A_3)&\sim
 SF-\frac{\ga}{3}+4\sum_{k=1}^{\infty}\ln\left(1+e^{-2 k\ga}\right)+2\ln2+\sum_{p=1}^\infty \frac{R^{(3)}_p}{S^p}\,.
\end{aligned}
\end{equation}
Let $q=e^{-\ga}$. Since
\begin{equation}\label{eq3.43}
\begin{aligned}
4e^{-\frac{\ga}{3}}\prod_{k=1}^\infty \left(1+e^{-2 k\ga}\right)^4&=
4q^{1/3}\prod_{k=1}^{\infty}(1+q^{2k})^4=4\left[q^{1/12}\prod_{k=1}^{\infty}\frac{(1-q^{4k})}{(1-q^{2k})}\right]^4\\
&=4\left[\frac{q^{1/6}\prod_{k=1}^{\infty}(1-q^{4k})}{q^{1/12}\prod_{k=1}^{\infty}(1-q^{2k})}\right]^4=4\left[\frac{\eta(2 \tau)}{\eta(\tau)}\right]^4=\frac{\theta_2^2}{\eta^2}\,,
\end{aligned}
\end{equation}
we obtain that
\begin{equation}\label{eq3.44}
\Pf A_3=e^{SF}\frac{\theta_2^2}{\eta^2}\,e^{R^{(3)}},
\quad R^{(3)}\sim 
\sum_{p=1}^\infty \frac{R^{(3)}_p}{S^p}\,.
\end{equation}

Let us turn to  $\Pf A_4$.
Since $\sin^2(x+\pi)=\sin^2x$, we can rewrite $\Pf A_4$ in \eqref{pfai} for even $n$ as
\begin{equation}\label{eq5.2}
\begin{aligned}
\Pf A_4&=\prod_{j=0}^{\frac{m}{2}-1}\prod_{k=0}^{\frac{n}{2}-1} \left[4\left(\sin^2\frac{(2j+1)\pi}{m}+\z^2\sin^2\frac{(2k+1)\pi}{n}\right)\right]\,.
\end{aligned}
\end{equation}
\noindent Using  identity \eqref{eq3.2},
we obtain that
\begin{equation}\label{eq6.3}
\begin{aligned}
&\Pf A_4=\prod_{k=0}^{\frac{n}{2}-1} 
\left[\left(u_{k}+\sqrt{1+u_{k}^2}\right)^{\frac{m}{2}}+\left(-u_{k}+\sqrt{1+u_{k}^2}\right)^{\frac{m}{2}}\right]^2\,,
\end{aligned}
\end{equation}
where
\begin{equation}\label{eq6.4}
u_{k}=\z\sin(\pi x_{k})\ge0, \quad x_{k}=\frac{2k+1}{n}\,,
\end{equation}
which implies that
\begin{equation}\label{eq6.8}
\ln (\Pf A_4)=G^{(4)}_{m,n}+H^{(4)}_{m,n}\,,
\end{equation}
where
\begin{equation}\label{eq6.9}
\begin{aligned}
&G^{(4)}_{m,n}=m\sum_{k=0}^{\frac{n}{2}-1} \ln\left(u_{k}+\sqrt{1+u_{k}^2}\right)
=m\sum_{k=0}^{\frac{n}{2}-1}  g\left(\frac{2k+1}{n}\right),\\
&H^{(4)}_{m,n}=2\sum_{k=0}^{\frac{n}{2}-1} \ln\left[1+\frac{1}{\left(u_{k}+\sqrt{1+u_{k}^2}\right)^m}\right].\\
\end{aligned}
\end{equation}
Using the Euler--Maclaurin formula, we obtain the
following asymptotic expansion:
\begin{equation}\label{Lem_G4.2}
G^{(4)}_{m,n}\sim SF+\frac{\ga}{6}
-m\sum_{p=1}^\infty\frac{B_{2p+2}\left(\frac{1}{2}\right)g_{2p+1}}{(p+1)\left(\frac{n}{2}\right)^{2p+1}}\,, \quad \ga=\pi\nu\z,
\end{equation}
and then, similar to Lemma \ref{lem3.2}, we obtain   that
\begin{equation}\label{Lem_H4.1}
H^{(4)}_{m,n}\sim A^{(4)}+B^{(4)},
\end{equation}
with
\begin{equation}\label{Lem_H4.2}
\begin{aligned}
&A^{(4)}=4\sum_{k=0}^{\infty}\ln\left(1+e^{-\ga (2k+1)}\right),\\
&B^{(4)}=m\sum_{p=1}^\infty\frac{B_{2p+2}\left(\frac{1}{2}\right)}{(p+1)\left(\frac{n}{2}\right)^{2p+1}}g_{2p+1}+\sum_{p=1}^\infty \frac{R^{(4)}_p}{S^p}\,,
\end{aligned}
\end{equation}
where
\begin{equation}\label{Lem_H4.3}
R_p^{(4)}=-\frac{2^{2p+1}\nu^{p+1}}{p+1}\,\De_p 
\left[K_{2p+2}^{\frac{1}{2},\frac{1}{2}}
\left(\frac{i \nu\la}{\pi}\right)\right]\bigg|_{\la=\pi\z}\,.
\end{equation}
Substituting \eqref{Lem_G4.2} and \eqref{Lem_H4.1} into \eqref{eq6.8},
we obtain that
\begin{equation}\label{eq4}
\Pf A_4=e^{SF}\frac{\theta_3^2}{\eta^2}\,e^{R^{(4)}},
\quad R^{(4)}\sim 
\sum_{p=1}^\infty \frac{R^{(4)}_p}{S^p}\,.
\end{equation}
Substituting equations \eqref{eq9.44}, \eqref{eq3.44}, and \eqref{eq4} into \eqref{tbc1a},  we obtain
the asymptotic formula for $Z$, \eqref{tcT,1.2}, for even $n$.



\section{Asymptotic behavior of $\Pf A_3$ for odd $n$}\label{a3o}

From equation \eqref{pfai} we have that
\begin{equation}\label{tbc2.2}
\begin{aligned}
\Pf A_3=\prod_{j=0}^{\frac{m}{2}-1}\prod_{k=0}^{n-1} \left[4\left(\sin^2\frac{(2j+1)\pi}{m}+\z^2\sin^2\frac{2\pi k}{n}\right)\right]^{1/2}\,.
\end{aligned}
\end{equation}
For odd $n$, using the identity $\sin^2(x+\pi)=\sin^2x$, we can 
rewrite the latter formula as
\begin{equation}\label{odd1}
\begin{aligned}
\Pf A_3&=\prod_{j=0}^{\frac{m}{2}-1}\prod_{k=0}^{n-1} \left[4\left(\sin^2\frac{(2j+1)\pi}{m}+\z^2\sin^2\frac{\pi k}{n}\right)\right]^{1/2}\,.
\end{aligned}
\end{equation}
Indeed, if we take $0\le k\le \frac{n-1}{2}\,$ in \eqref{tbc2.2},
then we obtain factors with  $\sin^2\frac{2\pi k}{n}$, while 
if we take $k=\frac{n+1}{2}+k'\,,$ $0\le k'\le \frac{n-3}{2}\,,$
then we obtain factors with 
\[
\sin^2\frac{2\pi}{n}\left(\frac{n+1}{2}+k'\right) =\sin^2\frac{\pi(2k'+1) }{n}\,.
\]
Combining these two cases, we obtain \eqref{odd1}. 

From \eqref{odd1}, using identity \eqref{eq3.2}, we obtain that
\begin{equation}\label{eq4.2}
\begin{aligned}
\Pf A_3=\prod_{k=0}^{n-1} 
\left[\left(u_{k}+\sqrt{1+u_{k}^2}\right)^{\frac{m}{2}}+\left(-u_{k}+\sqrt{1+u_{k}^2}\right)^{\frac{m}{2}}\right],
\end{aligned}
\end{equation}
where
\begin{equation}\label{eq4.21}
u_{k}=\z\sin(\pi x_{k})\ge0, \quad x_{k}=\frac{k}{n}\,,
\end{equation}
which implies that
\begin{equation}\label{eq4.3}
\ln (\Pf A_3)=G^{(3)}_{m,n}+H^{(3)}_{m,n}\,,
\end{equation}
where
\begin{equation}\label{eq4.4}
\begin{aligned}
&G^{(3)}_{m,n}=\frac{m}{2}\,\sum_{k=0}^{n-1} \ln\left(u_{k}+\sqrt{1+u_{k}^2}\right)
=\frac{m}{2}\,\sum_{k=0}^{n-1}  g\left(\frac{k}{n}\right),\\
&H^{(3)}_{m,n}=\sum_{k=0}^{n-1} \ln\left[1+\frac{1}{\left(u_{k}+\sqrt{1+u_{k}^2}\right)^m}\right].\\
\end{aligned}
\end{equation}
Formulae \eqref{eq4.2}-\eqref{eq4.4} are similar to \eqref{eq3.3}-\eqref{eq3.9} for even $n$,
but $x_k=\frac{2k}{n}$ for even $n$, while $x_k=\frac{k}{n}$ for odd $n$. This leads to the difference in elliptic nome \eqref{ntT6} for even and odd $n$.

Using the Euler--Maclaurin formula, we obtain  that
\begin{equation}\label{odd2}
G^{(3)}_{m,n}\sim SF-\frac{\ga}{12}
-\frac{m}{2}\sum_{p=1}^\infty\frac{B_{2p+2}\left(0\right)}{(p+1)n^{2p+1}}g_{2p+1}\,, \quad \ga=\pi\nu\z,
\end{equation}
and similar to the even case, we obtain the asymptotic expansion of 
$H^{(3)}_{m,n}$ as
\begin{equation}\label{odd3}
H^{(3)}_{m,n}\sim A^{(3)}+B^{(3)},
\end{equation}
with
\begin{equation}\label{odd4}
\begin{aligned}
&A^{(3)}=2\sum_{k=1}^\infty \ln(1+e^{-\ga k}),\\
&B^{(3)}=\frac{m}{2}\sum_{p=1}^\infty \frac{B_{2p+2}(0)g_{2p+1}}{(p+1)n^{2p+1}}+\sum_{p=1}^\infty\frac{R_p^{(3)}}{S^p}\,,
\end{aligned}
\end{equation}
where
\begin{equation}\label{odd5}
R_p^{(3)}=-\frac{\nu^{p+1}}{p+1}\,\De_p 
\left[K_{2p+2}^{0,\frac{1}{2}}
\left(\frac{i \nu\la}{2\pi}\right)\right]\bigg|_{\la=\pi\z}\,.
\end{equation}

Combining \eqref{eq4.3}, \eqref{odd2}, and \eqref{odd3}, we obtain 
that
\begin{equation}\label{eq4.17}
\begin{aligned}
\ln(\Pf A_3)&\sim SF-\frac{\ga}{12}+2\sum_{k=1}^\infty \ln(1+e^{-\ga k})+\sum_{p=1}^\infty\frac{R_p^{(3)}}{S^p}+\ln 2.
\end{aligned}
\end{equation}
Let now
\begin{equation}\label{eq4.18}
q=e^{\pi i\tau}=e^{-\frac{\ga}{2}}\,,
\end{equation}
then 
\begin{equation}\label{eq4.19}
\begin{aligned}
2e^{-\frac{\ga}{12}}\prod_{k=1}^\infty \left(1+e^{-\ga k}\right)^2&=2q^{1/6}\prod_{k=1}^{\infty}(1+q^{2k})^2
=2\left[\prod_{k=1}^{\infty}\frac{q^{1/6}(1-q^{4k})}{q^{1/12}(1-q^{2k})}\right]^2\\
&=2\left[\frac{\eta(2\tau)}{\eta(\tau)}\right]^2=\frac{\theta_2}{\eta}\,,
\end{aligned}
\end{equation}
hence
\begin{equation}\label{eq4.20}
\begin{aligned}
\ln Z=\ln(\Pf A_3)&\sim SF+\ln \frac{\theta_2}{\eta}
+\sum_{p=1}^\infty\frac{R_p^{(3)}}{S^p}\,.
\end{aligned}
\end{equation}
This finishes the proof of Theorem \ref{main_thmT_TBC}.

\begin{appendix}



\section{Exponent of a Taylor Series} \label{appA}

\begin{prop} We have that
\begin{equation}\label{exp1}
\begin{aligned}
\exp\left(\sum_{p=1}^\infty a_px^p\right)=1+\sum_{p=1}^\infty b_px^p,
\end{aligned}
\end{equation}
where
\begin{equation}\label{exp2}
\begin{aligned}
b_p= \sum_{\mathcal S_p}
\frac{(a_{p_1})^{q_1}\ldots (a_{p_r})^{q_r}}{q_1!\ldots q_r!}\,,
\end{aligned}
\end{equation}
and $\mathcal S_p$ is the set of collections of positive integers $(p_1,\ldots,p_r;q_1,\ldots,q_r)$, $1\le r\le p$, such that
\begin{equation}\label{exp3}
\begin{aligned}
\mathcal S_p=\left\{ 
(p_1,\ldots,p_r;q_1,\ldots,q_r)\;\big|\;
0<p_1<\ldots<p_r;\;  p_1q_1+\ldots+p_r q_r=p\right\}.
\end{aligned}
\end{equation}
The series in \eqref{exp1} are understood as formal ones.
\end{prop}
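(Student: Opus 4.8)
The plan is to expand the exponential directly as a formal power series and then apply the multinomial theorem, working throughout at the level of formal series so that no convergence issues arise. First I would write $\exp\left(\sum_{p=1}^\infty a_p x^p\right)=\sum_{k=0}^\infty \frac{1}{k!}\left(\sum_{p=1}^\infty a_p x^p\right)^k$, using the formal exponential series. For each fixed $k$, I would expand the $k$-th power by the multinomial theorem as a sum over tuples of nonnegative integers $(m_1,m_2,\ldots)$ with $\sum_p m_p=k$, producing terms $\frac{1}{k!}\binom{k}{m_1,m_2,\ldots}\prod_p a_p^{m_p}\,x^{p m_p}$, in which the $k!$ cancels against the multinomial coefficient and leaves $\frac{1}{\prod_p m_p!}\prod_p a_p^{m_p}\,x^{\sum_p p m_p}$.

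Next I would interchange the two summations so that the constraint $\sum_p m_p=k$ is absorbed: summing over $k\ge 0$ together with all tuples of fixed total $k$ is the same as summing over all finitely supported tuples $(m_1,m_2,\ldots)$ of nonnegative integers with no constraint. At this stage one has $\exp\left(\sum_p a_p x^p\right)=\sum_{(m_p)}\frac{1}{\prod_p m_p!}\prod_p a_p^{m_p}\,x^{\sum_p p m_p}$, and I would then collect terms according to the total degree $P=\sum_p p\,m_p$.

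The combinatorial heart of the argument, and the step where bookkeeping care is needed, is identifying the coefficient of $x^P$ with $b_P$. Given a tuple $(m_p)$ contributing to degree $P$, its nonzero entries occur at positions $0<p_1<\ldots<p_r$ with values $q_i=m_{p_i}>0$, and the degree constraint becomes $p_1 q_1+\ldots+p_r q_r=P$, which is exactly the defining relation of $\mathcal S_P$ in \eqref{exp3}. Under this correspondence $\prod_p m_p!=q_1!\ldots q_r!$ and $\prod_p a_p^{m_p}=a_{p_1}^{q_1}\ldots a_{p_r}^{q_r}$, so the coefficient of $x^P$ equals $\sum_{\mathcal S_P}\frac{(a_{p_1})^{q_1}\ldots(a_{p_r})^{q_r}}{q_1!\ldots q_r!}=b_P$, while the empty tuple supplies the constant term $1$, giving \eqref{exp1}. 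Finally I would note that for each fixed $P$ the constraint $\sum_p p\,m_p=P$ with all $p\ge 1$ forces $m_p=0$ for $p>P$ and bounds $\sum_p m_p\le P$, so only finitely many tuples contribute to each coefficient; every manipulation is therefore an identity of formal power series and no analytic estimate is required.
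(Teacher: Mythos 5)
Your proof is correct and follows essentially the same route as the paper: expand the formal exponential, apply the multinomial theorem so that the $k!$ cancels against the multinomial coefficient, and regroup terms by total degree $p_1q_1+\ldots+p_rq_r=p$. Your extra remark that only finitely many tuples contribute to each coefficient is a welcome clarification of why the formal manipulations are legitimate, but the argument is the same as the paper's.
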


\begin{proof} Expanding the exponent into the Taylor series, we obtain that
\begin{equation}\label{exp4}
\begin{aligned}
\exp\left(\sum_{p=1}^\infty a_px^p\right)=1+\sum_{k=1}^\infty \frac{1}{k!}\left(\sum_{p=1}^\infty a_p x^p\right)^k.
\end{aligned}
\end{equation}
By the multinomial formula,
\begin{equation}\label{exp5}
\begin{aligned}
\frac{1}{k!}\left(\sum_{p=1}^\infty a_p x^p\right)^k
=\sum_{0<p_1<\ldots<p_r,\; q_1>0,\ldots, q_r>0:\; q_1+\ldots+q_r=k} 
\frac{(a_{p_1}x^{p_1})^{q_1}\ldots (a_{p_r}x^{p_r})^{q_r}}{q_1!\ldots q_r!},
\end{aligned}
\end{equation}
hence
\begin{equation}\label{exp6}
\begin{aligned}
\exp\left(\sum_{p=1}^\infty a_p x^p\right)&=1+\sum_{k=1}^\infty 
\sum_{0<p_1<\ldots<p_r,\; q_1>0,\ldots, q_r>0:\; q_1+\ldots+q_r=k} 
\frac{(a_{p_1}x^{p_1})^{q_1}\ldots (a_{p_r}x^{p_r})^{q_r}}{q_1!\ldots q_r!}\\
&=1+\sum_{r=1}^\infty 
\sum_{0<p_1<\ldots<p_r,\; q_1>0,\ldots, q_r>0} 
\frac{(a_{p_1}x^{p_1})^{q_1}\ldots (a_{p_r}x^{p_r})^{q_r}}{q_1!\ldots q_r!}.
\end{aligned}
\end{equation}
Combining terms with $ p_1q_1+\ldots+p_r q_r=p$, we obtain formulae \eqref{exp1}, \eqref{exp2}.
\end{proof}

\section{Bernoulli's Polynomials} \label{appC}

Bernoulli's polynomials are defined recursively by the equations,
\begin{equation}\label{bp1}
\begin{aligned}
B_k'(x)=kB_{k-1}(x),\quad \int\limits_0^1 B_k(x)\,dx=0,\quad k=1,2,\ldots;\quad B_0(x)=1.
\end{aligned}
\end{equation}
In particular,
\begin{equation}\label{bp1a}
\begin{aligned}
B_1(x)=x-\frac{1}{2}\,,\quad B_2(x)=x^2-x+\frac{1}{6}\,,\quad 
B_3(x)=x^3-\frac{3x^2}{2}+\frac{x}{2}\,.
\end{aligned}
\end{equation}
The Bernoulli periodic functions $\widehat B_k(x)$ are defined by
the periodicity condition $\widehat B_k(x+1)=\widehat B_k(x)$ and
by the condition $\widehat B_k(x)=B_k(x)$ for $0\le x\le 1$. 
Their Fourier series is equal to
\begin{equation}\label{bp2}
\begin{aligned}
\widehat B_k(x)=-\frac{k!}{(-2\pi i)^k} \sum_{\ell\not=0} \frac{e^{-2\pi i \ell x}}{\ell^k}\,.
\end{aligned}
\end{equation}
For $k\ge 2$ the Fourier series is absolutely convergent, and for $k=1$ it converges in $L^2[0,1]$.

The generating function of the Bernoulli polynomials is
\begin{equation}\label{bp7}
\begin{aligned}
G(\la; x):=\frac{\la\, e^{\la x}}{e^{\la}-1}=\sum_{k=0}^\infty \frac{\la^k B_k(x)}{k!}\,.
\end{aligned}
\end{equation}
Substituting \eqref{bp2} into \eqref{bp7}, we obtain that for $0\le x\le 1$ and $|\la|<2\pi$,
\begin{equation}\label{bp12}
\begin{aligned}
\frac{\la\, e^{\la x}}{e^{\la}-1}&-1-\la\left(x-\frac{1}{2}\right)=\sum_{k=2}^\infty \frac{\la^k B_k(x)}{k!}
=-\sum_{k=2}^\infty \sum_{\ell\not=0}\frac{\la^k }{(-2\pi i)^k}\,  \frac{e^{-2\pi i \ell x}}{\ell^k}\\
&=-\sum_{\ell\not=0}e^{-2\pi i \ell x}\sum_{k=2}^\infty \left(\frac{\la }{-2\pi i \ell}\right)^k  
=-\sum_{\ell\not=0}e^{-2\pi i \ell x}\left(\frac{\la}{-2\pi i \ell}\right)^2\frac{1}{1+\frac{\la}{2\pi i \ell}}\\
&=\frac{\la^2}{4\pi^2}\sum_{\ell\not=0}\frac{e^{-2\pi i \ell x}}{\ell\left(\ell+\frac{\la}{2\pi i }\right)}\,.
\end{aligned}
\end{equation}
Taking $\la=2\pi i z$, where $|z|<1$, $x=\al$, and $\ell=k$, we obtain that
\begin{equation}\label{bp13}
\begin{aligned}
\frac{2\pi i z e^{2\pi i z \al}}{e^{2\pi i z}-1}
=1+2\pi iz\left(\al-\frac{1}{2}\right)-z^2\sum_{k\not=0}\frac{e^{-2\pi i k\al}}{k\left(k+z\right)}\,,
\end{aligned}
\end{equation}
or
\begin{equation}\label{bp14}
\begin{aligned}
\frac{ e^{2\pi i z \al}}{e^{2\pi i z}-1}
=\frac{1}{2\pi i z}+\left(\al-\frac{1}{2}\right)-\frac{z}{2\pi i}\sum_{k\not=0}\frac{e^{-2\pi i k\al}}{k\left(k+z\right)}\,,
\end{aligned}
\end{equation}
We can rewrite the latter equation as 
\begin{equation}\label{bp15}
\begin{aligned}
\frac{ e^{2\pi i z \al}}{e^{2\pi i z}-1}
=\frac{1}{2\pi i z}+\left(\al-\frac{1}{2}\right)-\frac{1}{2\pi i}\sum_{k\not=0}e^{-2\pi i k\al}\left(\frac{1}{k}-\frac{1}{k+z}\right)\,,
\end{aligned}
\end{equation}

\section{The Euler--Maclaurin Formula} \label{appD}

Let $f(x)$ be an analytic function on the interval $[a,b]$. We partition the interval $[a,b]$
into $N$ equal intervals of the length
\begin{equation}\label{em1}
h=\frac{b-a}{N}\,.
\end{equation}
Let
\begin{equation}\label{em2}
x_k=a+kh+\al h,\quad k=0,1,\ldots, N-1,
\end{equation}
where $0\le \al\le 1$.
Then the Euler--Maclaurin formula with a remainder is
\begin{equation}\label{em3}
\begin{aligned}
\sum_{k=0}^{N-1} f(x_k)
&=\frac{1}{h}\int\limits_a^b f(x)\,dx
+\sum_{p=1}^\ell\frac{B_{p}(\al)h^{p-1}}{p!}[f^{(p-1)}(b)-f^{(p-1)}(a)]+R_{\ell}(\al), 
\end{aligned}
\end{equation}
where $B_p(\al)$ is the  Bernoulli polynomial and the remainder $R_{\ell}(\al)$ can be written as
\begin{equation}\label{em4}
\begin{aligned}
R_{\ell}(\al)= \frac{h^{\ell}}{\ell !}\int\limits_0^1 \widehat B_{\ell}(\al-\tau)\left[ \sum_{k=0}^{N-1} f^{(\ell)}(a+kh+\tau h)\right]\,d\tau,
\end{aligned}
\end{equation}
where $\widehat B_{\ell}(x)$ is the periodic Bernoulli  function. 

Thus, the Euler-Maclaurin formula gives an asymptotic series,
\begin{equation}\label{em5}
\begin{aligned}
\sum_{k=0}^{N-1} f(x_k)
&\sim \frac{1}{h}\int\limits_a^b f(x)\,dx
+\sum_{p=1}^\infty\frac{B_{p}(\al)h^{p-1}}{p!}[f^{(p-1)}(b)-f^{(p-1)}(a)].
\end{aligned}
\end{equation}
In general, since both $B_p(x)$ and $f^{(p)}(x)$ grow
like $p!$, the series on the right in \eqref{em5} diverges.

\section{Kronecker's Double Series of Pure Imaginary Argument}\label{appF}

A classical reference to the Kronecker double series is the book of Weil \cite{Weil}. In this Appendix we
review and specify some results of  Ivashkevich,  Izmailian, and Hu \cite{IIH}. 
Let us consider the Kronecker double series with parameters $(\alpha,\beta)=(\frac{1}{2},0)$ as defined in \eqref{eq2.7} with argument $\tau=2ir\z$ and $(\al,\be)=(\frac{1}{2},\frac{1}{2}),(0,\frac{1}{2})$ as defined in \eqref{eq2.7} with argument $\tau=ir\z$. Observe that in all cases, if $p$ is odd, the terms $(j,k)$ and $(-j,-k)$ cancel each other. Hence $K^{\frac{1}{2},0}_{2p-1}(\tau)=K^{\frac{1}{2},\frac{1}{2}}_{2p-1}(\tau)=K^{0,\frac{1}{2}}_{2p-1}(\tau)=0$ for $p=1,2,\ldots\ $. Therefore, we will take $p$ to be even. Let us first consider the case $(\alpha,\beta)=(\frac{1}{2},0)$.\\
\subsection{Case $(\alpha,\beta)=(\frac{1}{2},0)$} From \eqref{eq2.7}, we have that
\begin{equation}\label{kr1}
\begin{aligned}
K_{2p}^{\frac{1}{2},0}(\tau)=\frac{(-1)^{p+1} (2p)!}{(2\pi )^{2p}}\sum_{(j,k)\not=(0,0)}\frac{(-1)^{k}}{(k+\tau j)^{2p}}\,.
\end{aligned}
\end{equation}
Separating terms with $j=0$, we obtain that
\begin{equation}\label{kr2}
\begin{aligned}
K_{2p}^{\frac{1}{2},0}(\tau)=\frac{(-1)^{p+1} (2p)!}{(2\pi )^{2p}}\sum_{k\not=0}\frac{(-1)^k}{k^{2p}}
+\frac{(-1)^{p+1} (2p)!}{(2\pi )^{2p}}\sum_{j\not=0}\sum_{k=-\infty}^\infty\frac{(-1)^{k}}{(k+\tau j)^{2p}}\,.
\end{aligned}
\end{equation}
The first term is just the Fourier series for the Bernoulli polynomial $B_{2p}\left(x\right)$ evaluated at $x=\frac{1}{2}$. Let us transform the second term.
Since the terms $(j,k)$ and $(-j,-k)$ give the same contribution, we can write that
\begin{equation}\label{kr3}
\begin{aligned}
K_{2p}^{\frac{1}{2},0}(\tau)=B_{2p}\left(\frac{1}{2}\right)
+\frac{2(-1)^{p+1} (2p)!}{(2\pi )^{2p}}\sum_{j=1}^\infty\sum_{k=-\infty}^\infty\frac{(-1)^{k}}{(k+\tau j)^{2p}}\,.
\end{aligned}
\end{equation}
When $z=iy$, $y>0$, and $\al=1/2$,
identity \eqref{bp15} reads
\begin{equation}\label{kr4}
\begin{aligned}
\frac{ e^{-\pi y }}{1-e^{-2\pi y}}
=\frac{1}{2\pi y}+\frac{1}{2\pi i}\sum_{k\not=0}(-1)^k\left(\frac{1}{k}-\frac{1}{k+iy}\right)\,.
\end{aligned}
\end{equation}
Expanding the left hand side into the geometric series, we obtain that
\begin{equation}\label{kr5}
\begin{aligned}
\sum_{k=0}^\infty e^{-2\pi y(k+\frac{1}{2})}
=\frac{1}{2\pi y}+\frac{1}{2\pi i}\sum_{k\not=0}(-1)^k\left(\frac{1}{k}-\frac{1}{k+iy}\right)\,.
\end{aligned}
\end{equation}
Differentiating this identity $(2p-1)$ times with respect to $y$, we obtain that
\begin{equation}\label{kr6}
\begin{aligned}
\sum_{k=0}^\infty \left[-2\pi\left(k+\frac{1}{2}\right)\right]^{2p-1}e^{-2\pi y(k+\frac{1}{2})}
=-\frac{1}{2\pi i}\sum_{k=-\infty}^\infty \frac{(-i)^{2p-1}(2p-1)!(-1)^k}{(k+iy)^{2p}}\,,
\end{aligned}
\end{equation}
or equivalently,
\begin{equation}\label{kr7}
\begin{aligned}
\frac{(-1)^p(2p)!}{(2\pi)^{2p}}\sum_{k=-\infty}^\infty \frac{(-1)^k}{(k+iy)^{2p}}
=2p \sum_{k=0}^\infty \left(k+\frac{1}{2}\right)^{2p-1}e^{-2\pi y(k+\frac{1}{2})}.
\end{aligned}
\end{equation}
Using this formula in \eqref{kr3} with $y=r\z j$, we obtain that
\begin{equation}\label{kr8}
\begin{aligned}
K_{2p}^{\frac{1}{2},0}(\tau)=B_{2p}\left(\frac{1}{2}\right)
-4p \sum_{j=1}^\infty \sum_{k=0}^\infty \left(k+\frac{1}{2}\right)^{2p-1}e^{-2r\la j(k+\frac{1}{2})}\,.
\end{aligned}
\end{equation}
Thus, we have the following proposition:

\begin{prop} We have that
\begin{equation}\label{kr9}
\begin{aligned}
 \sum_{j=1}^\infty \sum_{k=0}^\infty \left(k+\frac{1}{2}\right)^{2p-1}e^{-2r\la j(k+\frac{1}{2})}
=\frac{B_{2p}\left(\frac{1}{2}\right)-K_{2p}^{\frac{1}{2},0}(\tau)}{4p}\,.
\end{aligned}
\end{equation}
\end{prop}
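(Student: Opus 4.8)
The plan is to recognize that \eqref{kr9} is nothing but equation \eqref{kr8} solved for the double series, so once \eqref{kr8} is in hand there is no work left. Concretely, writing the content of \eqref{kr8} as $K_{2p}^{\frac12,0}(\tau)=B_{2p}\!\left(\tfrac12\right)-4p\,\Sigma$ with $\Sigma=\sum_{j\ge1}\sum_{k\ge0}\left(k+\tfrac12\right)^{2p-1}e^{-2r\la j(k+\frac12)}$, I would simply isolate $\Sigma=\tfrac{1}{4p}\bigl(B_{2p}(\tfrac12)-K_{2p}^{\frac12,0}(\tau)\bigr)$, which is \eqref{kr9} verbatim. Thus I would reference the chain \eqref{kr1}--\eqref{kr8}; the Proposition merely records that chain in its final algebraic form.

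For a self-contained account the substance lies in producing \eqref{kr8}, and I would organize it in three moves. First I would start from the definition \eqref{kr1} of $K_{2p}^{\frac12,0}(\tau)$ and peel off the $j=0$ slice as in \eqref{kr2}; by the Fourier expansion \eqref{bp2} of the periodic Bernoulli function this slice is exactly $B_{2p}\!\left(\tfrac12\right)$. Then I would invoke the reflection symmetry $(j,k)\mapsto(-j,-k)$, which fixes each summand because the order $2p$ is even, to fold the remaining sum onto $j\ge1$, giving \eqref{kr3}.

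The analytic heart is the evaluation of the inner sum $\sum_{k\in\Z}(-1)^k(k+\tau j)^{-2p}$ for fixed $j$. Here I would specialize the generating-function identity \eqref{bp15} to $z=iy$, $\al=\tfrac12$ to obtain \eqref{kr4}, expand its left-hand side as a geometric series to get the Lambert-type identity \eqref{kr5}, and then differentiate \eqref{kr5} exactly $2p-1$ times in $y$. This differentiation turns the elementary exponential sum on the left into $\sum_{k\ge0}\left(k+\tfrac12\right)^{2p-1}e^{-2\pi y(k+\frac12)}$ and the right-hand side into the target lattice sum, yielding \eqref{kr6} and, after clearing constants, \eqref{kr7}. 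Substituting $y=r\z j$, so that $2\pi y=2\la j$ with $\la=\pi\z$, back into \eqref{kr3} produces \eqref{kr8}, and the rearrangement of the first paragraph finishes the proof.

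The only genuinely delicate points are bookkeeping ones. The hard part will be justifying the term-by-term $(2p-1)$-fold differentiation passing from \eqref{kr5} to \eqref{kr7}: this is legitimate because for the relevant range $2p\ge4$ the series converge absolutely and their derivatives converge uniformly on $\{y\ge\de>0\}$, the exponential sum on the left being dominated by $\sum_k k^{2p-1}e^{-2\pi\de k}$. The remaining care is in tracking the factors $(-1)^k$, $(-i)^{2p-1}$, and $(2\pi)^{2p}$ through the differentiation so that the prefactor emerges as precisely $4p=2\cdot 2p$ rather than a near-miss constant; everything else is routine.
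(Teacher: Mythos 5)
Your proposal follows the paper's proof essentially verbatim: the paper also obtains the Proposition by reading off \eqref{kr8}, which it derives through exactly the chain you describe --- splitting off the $j=0$ slice as $B_{2p}\left(\frac{1}{2}\right)$ via \eqref{bp2}, folding by the symmetry $(j,k)\mapsto(-j,-k)$, specializing \eqref{bp15} to $z=iy$, $\al=\frac{1}{2}$, expanding geometrically, differentiating $2p-1$ times, and substituting $y=r\z j$. Your added remark on justifying the term-by-term differentiation is a point the paper leaves implicit, but the route is the same.
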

\noindent Applying it for $K_{2p+2}^{\frac{1}{2},0}(\tau)$, we obtain that
\begin{equation}\label{kr10}
\begin{aligned}
 \sum_{j=1}^\infty \sum_{k=0}^\infty \left(k+\frac{1}{2}\right)^{2p+1}e^{-2r\la j(k+\frac{1}{2})}
=\frac{B_{2p+2}\left(\frac{1}{2}\right)-K_{2p+2}^{\frac{1}{2},0}(\tau)}{4(p+1)}\,.
\end{aligned}
\end{equation}

\subsection{Case $(\alpha,\beta)=(\frac{1}{2},\frac{1}{2})$}
From \eqref{eq2.7}, we have that
\begin{equation}\label{kr11}
\begin{aligned}
K_{2p}^{\frac{1}{2},\frac{1}{2}}(\tau)=\frac{(-1)^{p+1} (2p)!}{(2\pi )^{2p}}\sum_{(j,k)\not=(0,0)}\frac{(-1)^{k+j}}{(k+\tau j)^{2p}}\,.
\end{aligned}
\end{equation}
Separating terms with $j=0$, we obtain that
\begin{equation}\label{kr12}
\begin{aligned}
K_{2p}^{\frac{1}{2},\frac{1}{2}}(\tau)=\frac{(-1)^{p+1} (2p)!}{(2\pi )^{2p}}\sum_{k\not=0}\frac{(-1)^k}{k^{2p}}
+\frac{(-1)^{p+1} (2p)!}{(2\pi )^{2p}}\sum_{j\not=0}\sum_{k=-\infty}^\infty\frac{(-1)^{k+j}}{(k+\tau j)^{2p}}\,.
\end{aligned}
\end{equation}
The first term is just the Fourier series for the Bernoulli polynomial $B_{2p}\left(x\right)$ evaluated at $x=\frac{1}{2}$. Let us transform the second term. Since the terms $(j,k)$ and $(-j,-k)$ give the same contribution, we can write that
\begin{equation}\label{kr13}
\begin{aligned}
K_{2p}^{\frac{1}{2},\frac{1}{2}}(\tau)=B_{2p}\Big(\frac{1}{2}\Big)
+\frac{2(-1)^{p+1} (2p)!}{(2\pi )^{2p}}\sum_{j=1}^\infty\sum_{k=-\infty}^\infty\frac{(-1)^{k+j}}{(k+\tau j)^{2p}}\,.
\end{aligned}
\end{equation}
Now from identity \eqref{kr7}, with $y=r\z j$, we obtain that
\begin{equation}\label{kr14}
\begin{aligned}
K_{2p}^{\frac{1}{2},\frac{1}{2}}(\tau)=B_{2p}\Big(\frac{1}{2}\Big)
-4p \sum_{j=1}^\infty \sum_{k=0}^\infty (-1)^j\left(k+\frac{1}{2}\right)^{2p-1}e^{-2r\la j(k+\frac{1}{2})}\,.
\end{aligned}
\end{equation}
Thus, we have the following proposition:

\begin{prop} We have that
\begin{equation}\label{kr15}
\begin{aligned}
 \sum_{j=1}^\infty \sum_{k=0}^\infty (-1)^j\left(k+\frac{1}{2}\right)^{2p-1}e^{-2r\la j(k+\frac{1}{2})}
=\frac{B_{2p}\left(\frac{1}{2}\right)-K_{2p}^{\frac{1}{2},\frac{1}{2}}(\tau)}{4p}\,.
\end{aligned}
\end{equation}
\end{prop}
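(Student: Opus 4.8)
The plan is to follow the template of the case $(\al,\be)=(\frac{1}{2},0)$ almost verbatim, since all of the analytic content is already contained in the derivation \eqref{kr11}--\eqref{kr14}; the Proposition is then obtained simply by solving \eqref{kr14} for the double sum. Concretely, starting from the definition \eqref{eq2.7} of $K_{2p}^{\frac{1}{2},\frac{1}{2}}(\tau)$, I would first split off the $j=0$ contribution, which is $\frac{(-1)^{p+1}(2p)!}{(2\pi)^{2p}}\sum_{k\ne 0}\frac{(-1)^k}{k^{2p}}$. By the Fourier expansion \eqref{bp2} of the periodic Bernoulli function evaluated at $x=\frac{1}{2}$, this equals $B_{2p}(\frac{1}{2})$, producing the first term in \eqref{kr13}.

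Next I would fold the remaining sum over $j\ne 0$ onto $j\ge 1$ via the substitution $(j,k)\mapsto(-j,-k)$. This substitution leaves the summand $\frac{(-1)^{k+j}}{(k+\tau j)^{2p}}$ unchanged, because $(-1)^{k+j}$ is invariant and $2p$ is even, so the two halves coincide and we gain the overall factor $2$ appearing in \eqref{kr13}. The essential observation is then that the sign $(-1)^j$ does not depend on the inner index $k$, so it factors out of the $k$-summation; what remains inside is exactly $\sum_{k=-\infty}^\infty \frac{(-1)^k}{(k+iy)^{2p}}$ with $y=r\z j$ (recall $\tau=ir\z$), which is precisely the quantity converted by the already-established identity \eqref{kr7}. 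Applying \eqref{kr7} and restoring the factor $(-1)^j$ in the outer $j$-sum yields \eqref{kr14}.

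Finally, moving $B_{2p}(\frac{1}{2})$ to the left-hand side of \eqref{kr14} and dividing by $4p$ gives \eqref{kr15}. I do not expect a serious obstacle here: the one point requiring attention is the careful bookkeeping of the sign $(-1)^j$ as it is carried through the application of \eqref{kr7}, together with the justification that the reindexing $(j,k)\mapsto(-j,-k)$ and the term-by-term manipulations are permitted. The latter follows from the absolute convergence of the double series, which holds in the relevant range $2p\ge 4$ as noted after \eqref{eq2.7} (and which is what the downstream application to $K_{2p+2}^{\frac12,\frac12}$, in the spirit of \eqref{kr10}, actually uses).
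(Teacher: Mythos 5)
Your proposal is correct and follows essentially the same route as the paper's own derivation: split off the $j=0$ terms to recover $B_{2p}(\tfrac12)$ via the Fourier expansion of the periodic Bernoulli function, use the $(j,k)\mapsto(-j,-k)$ symmetry to restrict to $j\ge 1$ with a factor of $2$, factor $(-1)^j$ out of the inner sum, and apply the identity \eqref{kr7} before rearranging. No gaps to report.
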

\noindent Applying it for $K_{2p+2}(\tau)$, we obtain that
\begin{equation}\label{kr16}
\begin{aligned}
 \sum_{j=1}^\infty \sum_{k=0}^\infty (-1)^j\left(k+\frac{1}{2}\right)^{2p+1}e^{-2r\la j(k+\frac{1}{2})}
=\frac{B_{2p+2}\left(\frac{1}{2}\right)-K_{2p+2}^{\frac{1}{2},\frac{1}{2}}(\tau)}{4(p+1)}\,.
\end{aligned}
\end{equation}

\subsection{Case $(\alpha,\beta)=(0,\frac{1}{2})$}
From \eqref{eq2.7}, we have that
\begin{equation}\label{kr17}
\begin{aligned}
K_{2p}^{0,\frac{1}{2}}(\tau)=\frac{(-1)^{p+1} (2p)!}{(2\pi )^{2p}}\sum_{(j,k)\not=(0,0)}\frac{(-1)^{j}}{(k+\tau j)^{2p}}\,.
\end{aligned}
\end{equation}
Separating terms with $j=0$, we obtain that
\begin{equation}\label{kr18}
\begin{aligned}
K_{2p}^{0,\frac{1}{2}}(\tau)=\frac{(-1)^{p+1} (2p)!}{(2\pi )^{2p}}\sum_{k\not=0}\frac{1}{k^{2p}}
+\frac{(-1)^{p+1} (2p)!}{(2\pi )^{2p}}\sum_{j\not=0}\sum_{k=-\infty}^\infty\frac{(-1)^{j}}{(k+\tau j)^{2p}}\,.
\end{aligned}
\end{equation}
The first term is just the Fourier series for the Bernoulli polynomial $B_{2p}\left(x\right)$ evaluated at $x=0$. Let us transform the second term.
Since the terms $(j,k)$ and $(-j,-k)$ give the same contribution, we can write that
\begin{equation}\label{kr19}
\begin{aligned}
K_{2p}^{0,\frac{1}{2}}(\tau)=B_{2p}\left(0\right)
+\frac{2(-1)^{p+1} (2p)!}{(2\pi )^{2p}}\sum_{j=1}^\infty\sum_{k=-\infty}^\infty\frac{(-1)^{j}}{(k+\tau j)^{2p}}\,.
\end{aligned}
\end{equation}
When $z=iy$, $y>0$, and $\al=0$,
identity \eqref{bp15} reads
\begin{equation}\label{kr20}
\begin{aligned}
\frac{1}{1-e^{-2\pi y}}
=\frac{1}{2\pi y}+\frac{1}{2}+\frac{1}{2\pi i}\sum_{k\not=0}\left(\frac{1}{k}-\frac{1}{k+iy}\right)\,.
\end{aligned}
\end{equation}
Expanding the left hand side into the geometric series, we obtain that
\begin{equation}\label{kr21}
\begin{aligned}
\sum_{k=0}^\infty e^{-2\pi yk}
=\frac{1}{2\pi y}+\frac{1}{2}+\frac{1}{2\pi i}\sum_{k\not=0}\left(\frac{1}{k}-\frac{1}{k+iy}\right)\,.
\end{aligned}
\end{equation}
Differentiating this identity $(2p-1)$ times with respect to $y$, we obtain that
\begin{equation}\label{kr22}
\begin{aligned}
\sum_{k=0}^\infty \left[-2\pi k\right]^{2p-1}e^{-2\pi yk}
=-\frac{1}{2\pi i}\sum_{k=-\infty}^\infty \frac{(-i)^{2p-1}(2p-1)!}{(k+iy)^{2p}}\,,
\end{aligned}
\end{equation}
or equivalently,
\begin{equation}\label{kr23}
\begin{aligned}
\frac{(-1)^p(2p)!}{(2\pi)^{2p}}\sum_{k=-\infty}^\infty \frac{1}{(k+iy)^{2p}}
=2p \sum_{k=0}^\infty k^{2p-1}e^{-2\pi yk}.
\end{aligned}
\end{equation}
Using this formula in \eqref{kr19} with $y=r\z j$, we obtain that
\begin{equation}\label{kr24}
\begin{aligned}
K_{2p}^{0,\frac{1}{2}}(\tau)=B_{2p}\left(0\right)
-4p \sum_{j=1}^\infty \sum_{k=0}^\infty (-1)^jk^{2p-1}e^{-2\la r jk}\,.
\end{aligned}
\end{equation}
Thus, we have the following proposition:

\begin{prop} We have that
\begin{equation}\label{kr25}
\begin{aligned}
 \sum_{j=1}^\infty \sum_{k=0}^\infty (-1)^jk^{2p-1}e^{-2\la r jk}
=\frac{B_{2p}\left(0\right)-K_{2p}^{0,\frac{1}{2}}(\tau)}{4p}\,.
\end{aligned}
\end{equation}
\end{prop}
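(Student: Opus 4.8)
The plan is to read off the claimed identity as an immediate rearrangement of equation \eqref{kr24}, which already expresses $K_{2p}^{0,\frac{1}{2}}(\tau)$ as $B_{2p}(0)$ minus $4p$ times exactly the double sum in question. Solving \eqref{kr24} for that double sum gives \eqref{kr25}. The substance of the argument therefore lies in establishing \eqref{kr24}, and I would obtain it by following verbatim the three-step scheme already used for the cases $(\al,\be)=(\frac{1}{2},0)$ and $(\frac{1}{2},\frac{1}{2})$.

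First I would start from the defining series \eqref{kr17} and split off the $j=0$ terms as in \eqref{kr18}. The $j=0$ contribution $\frac{(-1)^{p+1}(2p)!}{(2\pi)^{2p}}\sum_{k\neq 0}k^{-2p}$ is precisely the Fourier expansion \eqref{bp2} of $B_{2p}$ at $x=0$, so it equals $B_{2p}(0)$. For the sum over $j\neq 0$, the pairing $(j,k)\mapsto(-j,-k)$ leaves each summand invariant, since $(-1)^{-j}=(-1)^j$ and $(k+\tau j)^{2p}$ is unchanged because $2p$ is even; this folds the sum onto $j\ge 1$ at the cost of a factor $2$, yielding \eqref{kr19}.

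Next I would produce a closed form for the inner $k$-sum. Specializing the master identity \eqref{bp15} to $\al=0$ and $z=iy$ with $y>0$ gives \eqref{kr20}; expanding the left-hand side $\frac{1}{1-e^{-2\pi y}}$ as a geometric series turns this into \eqref{kr21}. Differentiating \eqref{kr21} exactly $(2p-1)$ times in $y$ brings down $(-2\pi k)^{2p-1}$ on the left and converts $\frac{1}{k+iy}$ into $\frac{(-i)^{2p-1}(2p-1)!}{(k+iy)^{2p}}$ on the right; clearing constants gives \eqref{kr23}. Setting $y=r\z j$ and summing over $j\ge 1$ then substitutes this closed form into \eqref{kr19} and collapses everything to \eqref{kr24}, whence \eqref{kr25} follows by algebra.

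The only point requiring care---and the nearest thing to an obstacle---is bookkeeping of the prefactor $\frac{(-1)^p(2p)!}{(2\pi)^{2p}}$ through the differentiation: one must verify that the powers of $i$ and the sign $(-1)^{p+1}$ in \eqref{kr19} combine with $(-i)^{2p-1}$ and the extra $2p$ from \eqref{kr23} to give exactly the coefficient $-4p$ in \eqref{kr24}. This is the identical constant-tracking that succeeded via \eqref{kr7} in the $(\frac{1}{2},0)$ case, so no new difficulty arises; the absence here of the alternating factor $(-1)^k$ is precisely what replaces $k+\frac{1}{2}$ by $k$ in the exponent and $B_{2p}(\frac{1}{2})$ by $B_{2p}(0)$.
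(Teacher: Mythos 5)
Your proposal is correct and follows exactly the paper's own derivation: the proposition is read off from \eqref{kr24}, which is obtained by separating the $j=0$ terms, folding the $j\neq 0$ sum onto $j\ge 1$, and applying the $\al=0$ specialization of \eqref{bp15} differentiated $(2p-1)$ times, just as in the $(\frac{1}{2},0)$ and $(\frac{1}{2},\frac{1}{2})$ cases. Nothing further is needed.
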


Applying it for $K_{2p+2}^{0,\frac{1}{2}}(\tau)$, we obtain that
\begin{equation}\label{kr26}
\begin{aligned}
 \sum_{j=1}^\infty \sum_{k=0}^\infty (-1)^jk^{2p+1}e^{-2\la r jk}
=\frac{B_{2p+2}\left(0\right)-K_{2p+2}^{0,\frac{1}{2}}(\tau)}{4(p+1)}\,.
\end{aligned}
\end{equation}

\end{appendix}

\end{document}